\documentclass{sig-alternate}

\usepackage{amsthm,amssymb,amsmath,stmaryrd}
\usepackage{graphics}
\usepackage{bbm}
\usepackage{tikz}
\usepackage[plainpages=false,pdfpagelabels=false,colorlinks=true,citecolor=blue,hypertexnames=false]{hyperref}
\usepackage{color}
\usepackage{graphicx}
\usepackage{amsmath}
\usepackage{amssymb}
\usepackage{enumerate}
\usepackage{mathrsfs}
\usepackage{makecell}
\usepackage[noend]{algpseudocode}
\usepackage{multirow}
\usepackage{mathptmx}

\usepackage{dsfont}
\usepackage{epsfig}
\usepackage{mathrsfs}
\usepackage{amsmath}
\usepackage{cases}

\usepackage[vlined,ruled,linesnumbered,boxed]{algorithm2e}

\overfullrule=1mm
\overfullrule=1ex

\newtheorem{theorem}{Theorem}[section]
\newtheorem{example}[theorem]{Example}
\newtheorem{corollary}[theorem]{Corollary}
\newtheorem{lemma}[theorem]{Lemma}
\newtheorem{proposition}[theorem]{Proposition}
\newtheorem{definition}[theorem]{Definition}

\makeatletter \@addtoreset{equation}{section}


\begin{document}

\clubpenalty=10000 \widowpenalty =10000

\title{An Effective Framework for Constructing Exponent Lattice Basis of Nonzero Algebraic Numbers}

\numberofauthors{1}

\author{\medskip
Tao Zheng and Bican Xia\\
       \smallskip
       \affaddr{School of Mathematical Sciences, Peking University}\\
       \smallskip
      \email{1601110051@pku.edu.cn, xbc@math.pku.edu.cn}
}

\maketitle
\begin{abstract}
Computing a basis for the exponent lattice of algebraic numbers is a basic problem in the field of computational number theory with applications to many other areas. The main cost of a well-known algorithm \cite{ge1993algorithms,kauers2005algorithms} solving the problem is on computing the primitive element of the extended field generated by the given algebraic numbers. When the extended field is of large degree, the problem seems intractable by the tool implementing the algorithm. In this paper, a special kind of exponent lattice basis is introduced. An important feature of the basis is that it can be inductively constructed, which allows us to deal with the given algebraic numbers one by one when computing the basis. Based on this, an effective framework for constructing exponent lattice basis is proposed. Through computing a so-called pre-basis first and then solving some linear Diophantine equations, the basis can be efficiently constructed. A new certificate for multiplicative independence and some techniques for decreasing degrees of algebraic numbers are provided to speed up the computation. The new algorithm has been implemented with Mathematica and its effectiveness is verified by testing various examples. Moreover, the algorithm is applied to program verification for finding invariants of linear loops.
\end{abstract}

\category{I.1.2}{Computing Methodologies}{Symbolic and Algebraic Manipulation}[Algebraic Algorithms]

\terms{Algorithms, Theory}

\keywords{Exponent Lattice Basis, Multiplicative dependence, Multiplicative relation, Diophantine equation, Algebraic number}


\section{Introduction}

For $x=(x_1,\ldots,x_n)^T$$\in(\overline{\mathbb{Q}}^*)^n$, where $\overline{\mathbb{Q}}$ denotes the set of algebraic numbers, vectors $\alpha=(k_1,\ldots,k_n)^T$$\in\mathbb{Z}^n$ satisfying $x^\alpha$$= x_1^{k_1}x_2^{k_2}\cdots x_n^{k_n}=1$ form an \emph{exponent lattice}, which accepts a basis since $\mathbb{Z}$ is Noetherian. Computing that basis is a significant problem from both practical  and theoretical points of view.

Exponent lattice has various applications. For example, based on computing exponent lattice, an algorithm was proposed to compute the Zariski closure of a finitely generated group of invertible matrices in \cite{derksen2005quantum}, the growth behavior when $k$$\rightarrow $$\infty$ of rational linear recurrence sequence was studied in \cite{recurrence}, the problem of finding integers $k_1,\ldots,k_n$ \emph{s.t.} $\mathcal{G}_1^{k_1}\cdots\mathcal{G}_n^{k_n}$ is a rational function for $G-$solutions $\mathcal{G}_j$ in  \cite{chen2011structure} could be solved, also an algorithm was provided to compute the ideal of algebraic relations among C-finite sequences in \cite{kauers2008computing}. A class of loop invariants called L-invariants introduced in \cite{lvov2010polynomial} are closely related to exponent lattice: each vector in the lattice corresponds to an L-invariant. 
Moreover Theorems 3 and 5 in \cite{structure} show that a part of the invariant ideal of a linear loop is exactly the lattice ideal defined by the exponent lattice. The lattice ideal accepts a set of finite generators corresponding to a Markov basis of the lattice which can be computed from the usual basis by the method in \cite{markov}.

The first result leading to the computability of the exponent lattice basis was presented in \cite{masser1988linear}, 
which bounds a basis of the lattice inside a box that contains finitely many vectors, allowing one to do exhaustive search inside the box to obtain a basis of the lattice. A more efficient algorithm to solve this problem was proposed in \cite{ge1993algorithms}. It is redescribed in \cite{kauers2005algorithms} $\S7.3$ and implemented as a Mathematica package \textbf{FindRelations}.

According to \cite{kauers2005algorithms}, ``the runtime is usually negligible compared to the time needed for computing the primitive element''. That means \textbf{FindRelations} usually takes much time to compute a primitive element of the extended field generated by the given algebraic numbers. When degree of the extended field is slightly large, the function tends to fail to return an answer within one hour. 
In this paper, a special kind of exponent lattice basis is introduced. An important feature of the basis is that it can be inductively constructed, which allows us to deal with the given algebraic numbers one by one when computing the basis. Based on this, an effective framework for constructing exponent lattice basis is proposed. Through computing a so-called pre-basis first and then solving some linear Diophantine equations, the basis can be efficiently constructed. A new certificate for multiplicative independence and some techniques for decreasing degrees of algebraic numbers are provided to speed up the computation.

The paper is organized as follows. Section \ref{sec:def} introduces basic conceptions used throughout the paper. The main result on constructing a basis of the lattice (Theorem \ref{basisthm}) and a certificate for multiplicative independence (Theorem \ref{suff}) are given in Section \ref{sec:main}. Main algorithms computing the pre-basis and the basis are then presented in Section \ref{sec:alg}. Section \ref{sec:red} devotes to design algorithms decreasing degrees  of algebraic numbers. Experiments are carried out and an application to compute invariant ideal of linear loops is illustrated in Section \ref{sec:example}. Finally, the paper is concluded in Section \ref{sec:conclusions}.

\section{Definitions and conceptions}\label{sec:def}
\begin{definition}
A sequence $\{x_i\}_{i=1}^n\subset{\overline{\mathbb{Q}}}^*$ is \emph{multiplicatively dependent} \emph{(dependent} for short\emph{)} if  there are integers $k_1,\ldots,k_n$ not all zero such that $ x_1^{k_1}\cdots x_n^{k_n}=1$, otherwise it is \emph{multiplicatively independent} \emph{(independent} for short\emph{)}.
\end{definition}
\noindent{}\emph{Note:} An empty sequence $\epsilon$ containing no algebraic numbers is multiplicatively independent with \textbf{length}($\epsilon)=0$, and is a subsequence of any sequence of nonzero algebraic numbers.
\begin{definition}
\label{mis}
For sequence $\{x_i\}_{i=1}^n\subset{\overline{\mathbb{Q}}}^*$, its subsequence $S$ is a \emph{maximal independent sequence} if : (i) $S$ is multiplicatively independent and (ii) \textbf{\emph{length}}$(S)<n$ implies any supersequence $T$ of $S$ with \textbf{\emph{length}}$(T)>$\textbf{\emph{length}}$(S)$ is multiplicatively dependent.
\end{definition}

\noindent{}\emph{Note: }$\epsilon$ is a maximal independent sequence of $x_1,\ldots,x_n$ iff every $x_j$ is a root of unity.
\begin{definition}
 A number $\mathfrak{a}\in\overline{\mathbb{Q}}^*$ can be \emph{pseudo}-\emph{multiplicatively represented} by $x_1,\ldots,x_n\in\overline{\mathbb{Q}}^*$ if  there are integers $k_1,\ldots,k_n$ and $k\neq0$ such that $\mathfrak{a}^k=x_1^{k_1}\cdots x_n^{k_n}$.  If $n=0$, $\mathfrak{a}$ can  be \emph{pseudo}-\emph{multiplica}-\emph{tively} \emph{represented} by $\epsilon$ means $\mathfrak{a}$ is a root of unity.
\end{definition}
\begin{definition}
The \emph{order} of a root of unity $\mathfrak{a}$, denoted by $\textbf{\emph{Order}}(\mathfrak{a})$, is the least positive integer so that $\mathfrak{a}^k=1$.
\end{definition}
\begin{proposition}
\label{pmr}
Suppose $S$ is a maximal independent sequence of $\{x_i\}_{i=1}^n\subset{\overline{\mathbb{Q}}}^*$. Then $x_i$ can be pseudo-multiplicatively represented by $S$ for each $i=1,...,n$.
\end{proposition}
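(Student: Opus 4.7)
The plan is to split into three cases based on whether $S$ is empty, whether $x_i$ already lies in $S$, and otherwise.

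First, if $S$ is the empty sequence $\epsilon$, then by the note following Definition \ref{mis} every $x_j$ is a root of unity, which by the convention in the definition of pseudo-multiplicative representation is exactly what it means for $x_j$ to be pseudo-multiplicatively represented by $\epsilon$. Second, if $x_i$ itself appears in $S$, the identity $x_i^1 = x_i$ already exhibits a pseudo-multiplicative representation (take $k=1$ and put exponent $1$ on $x_i$ and $0$ on the other entries of $S$).

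The main case is when $S = (x_{j_1},\ldots,x_{j_m})$ is nonempty with $m<n$ and $x_i\notin S$. Here I would form the supersequence $T$ obtained by appending $x_i$ to $S$. Since $\textbf{length}(T)=m+1>m=\textbf{length}(S)$, the maximality clause (ii) in Definition \ref{mis} forces $T$ to be multiplicatively dependent. Hence there exist integers $k, k_1,\ldots,k_m$, not all zero, with
\[
x_i^{k}\,x_{j_1}^{k_1}\cdots x_{j_m}^{k_m}=1.
\]
The key small observation, and the only place where the independence hypothesis on $S$ is used, is that $k\neq 0$: otherwise the relation would reduce to $x_{j_1}^{k_1}\cdots x_{j_m}^{k_m}=1$ with $(k_1,\ldots,k_m)\neq 0$, contradicting the multiplicative independence of $S$. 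Rearranging gives $x_i^{k}=x_{j_1}^{-k_1}\cdots x_{j_m}^{-k_m}$ with $k\neq 0$, which is precisely a pseudo-multiplicative representation of $x_i$ by $S$.

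There is essentially no serious obstacle; the only subtle point is the $k\neq 0$ check, which is a one-line consequence of independence. Combining the three cases yields the result for every $i=1,\ldots,n$.
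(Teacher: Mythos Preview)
Your argument is correct. The paper actually states Proposition~\ref{pmr} without proof, treating it as an elementary consequence of the definitions, so there is nothing to compare against directly; the three-case split you give is exactly the natural justification one would supply, and each case is handled properly.

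One tiny wording point: in the main case you ``append'' $x_i$ to $S$, but strictly speaking Definition~\ref{mis}(ii) quantifies over supersequences $T$ that are themselves subsequences of $\{x_i\}_{i=1}^n$, so formally you should insert $x_i$ in its correct position rather than append. This is purely cosmetic, since multiplicative dependence is insensitive to the ordering of the terms, and your conclusion that $T$ is dependent and hence $k\neq 0$ goes through unchanged.
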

\begin{lemma}
\label{rank}
Let $S=\{y_i\}_{i=1}^m\subset\overline{\mathbb{Q}}^*$ and $T= \{x_j\}_{j=1}^n\subset\overline{\mathbb{Q}}^*$. If every $y_j$ can be pseudo-multiplicatively represented by $T$ and $m>n$, then $S$ is multiplicatively dependent.
\end{lemma}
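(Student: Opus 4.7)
The plan is to transport the problem from the multiplicative group $\overline{\mathbb{Q}}^*$ into a $\mathbb{Q}$-vector space, where dimension counting gives the dependence for free, and then to lift the resulting rational relation back to an integer multiplicative relation on the $y_i$.

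First I would write down the hypothesis explicitly: for each $i \in \{1,\ldots,m\}$ there exist integers $a_{i1},\ldots,a_{in}$ and $k_i \neq 0$ such that $y_i^{k_i} = x_1^{a_{i1}}\cdots x_n^{a_{in}}$. To each $y_i$ I attach the ``formal logarithm'' vector
\[
v_i \;=\; \left(\frac{a_{i1}}{k_i},\;\ldots,\;\frac{a_{in}}{k_i}\right) \;\in\; \mathbb{Q}^n.
\]
Since there are $m$ such vectors in an $n$-dimensional $\mathbb{Q}$-vector space with $m>n$, they must be linearly dependent over $\mathbb{Q}$: there exist $q_1,\ldots,q_m \in \mathbb{Q}$, not all zero, with $\sum_{i=1}^m q_i v_i = 0$. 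Clearing denominators yields integers $c_1,\ldots,c_m$, not all zero, with $\sum_{i=1}^m c_i v_i = 0$ in $\mathbb{Q}^n$.

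Next I would convert this rational identity back into a multiplicative identity on $\overline{\mathbb{Q}}^*$. Set $K := k_1 k_2 \cdots k_m \neq 0$. Then
\[
\prod_{i=1}^m y_i^{c_i K} \;=\; \prod_{i=1}^m \bigl(y_i^{k_i}\bigr)^{c_i K/k_i} \;=\; \prod_{i=1}^m \prod_{j=1}^n x_j^{a_{ij} c_i K/k_i} \;=\; \prod_{j=1}^n x_j^{\,K \sum_{i} c_i a_{ij}/k_i}.
\]
The exponent of each $x_j$ vanishes precisely because the $j$-th coordinate of $\sum_i c_i v_i$ is zero. Hence $\prod_i y_i^{c_i K} = 1$. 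Since $K\neq 0$ and the $c_i$ are integers not all zero, the exponents $c_i K$ are integers not all zero, so $S$ is multiplicatively dependent.

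I do not anticipate any real obstacle; the only thing to be careful about is the edge case $n=0$, where ``pseudo-multiplicatively represented by $\epsilon$'' forces every $y_i$ to be a root of unity, and the argument above still goes through since $\mathbb{Q}^0 = \{0\}$ trivially makes the $v_i$ dependent (one can also just take $c_1=1$, $c_i=0$ for $i>1$, and use that $y_1^K=1$). The essential idea is the standard one: pseudo-multiplicative representability encodes $y_i$ as a rational point in the lattice generated by $x_1,\ldots,x_n$ (after tensoring with $\mathbb{Q}$), and pigeonholing on dimensions gives the relation, which the factor $K$ turns back into an honest integer exponent identity.
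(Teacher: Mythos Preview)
The paper states this lemma without proof, so there is nothing to compare your argument against. Your proof is correct and is the standard one: reduce to linear dependence of the rational exponent vectors $v_i\in\mathbb{Q}^n$, then clear denominators and multiply the resulting integer relation by $K=\prod_i k_i$ so that every exponent $c_iK/k_i$ is an integer and the $x_j$-exponents cancel. The handling of the boundary case $n=0$ is also fine.
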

\begin{proposition}
Let $S$ and $T$ be two maximal independent sequences of $x_1,\ldots,x_n$, then $\emph{\textbf{length}}(S)=\emph{\textbf{length}}(T)$.
\end{proposition}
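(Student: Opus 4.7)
The plan is to mimic the standard linear-algebra argument that any two bases of a vector space have equal cardinality, using Proposition \ref{pmr} as the analogue of ``spanning'' and Lemma \ref{rank} as the analogue of the exchange/rank comparison.

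First I would pick up two maximal independent sequences $S$ and $T$ of $\{x_i\}_{i=1}^n$. By Definition \ref{mis}, both $S$ and $T$ are multiplicatively independent subsequences of $\{x_i\}_{i=1}^n$, so in particular every element of $T$ is one of the $x_i$'s. Proposition \ref{pmr} applied to the maximal independent sequence $S$ then tells us that every $x_i$, and hence every element of $T$, can be pseudo-multiplicatively represented by $S$.

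Next I would argue by contradiction. Suppose $\textbf{length}(T) > \textbf{length}(S)$. Then Lemma \ref{rank}, with $T$ playing the role of $\{y_i\}$ and $S$ playing the role of $\{x_j\}$, forces $T$ to be multiplicatively dependent, contradicting the fact that $T$ is a maximal independent sequence. Therefore $\textbf{length}(T) \leq \textbf{length}(S)$, and by exchanging the roles of $S$ and $T$ the reverse inequality holds as well, yielding $\textbf{length}(S)=\textbf{length}(T)$.

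The only mild subtlety is the degenerate case where one of the sequences is the empty sequence $\epsilon$; this is handled directly by the note after Definition \ref{mis}: if $S=\epsilon$ then every $x_j$ is a root of unity, in which case any non-empty subsequence $T$ of $\{x_i\}_{i=1}^n$ is multiplicatively dependent, so the only maximal independent $T$ can also be $\epsilon$. Beyond this bookkeeping, there is really no hard step, since the two preceding results already do the heavy lifting; the proof is essentially a two-line Steinitz-type exchange argument.
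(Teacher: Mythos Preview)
Your argument is correct and is precisely the approach the paper intends: the paper omits an explicit proof but places Proposition~\ref{pmr} and Lemma~\ref{rank} immediately before this proposition exactly so that the two-line Steinitz-type argument you give (represent every element of $T$ by $S$ via Proposition~\ref{pmr}, then invoke Lemma~\ref{rank} to rule out $\textbf{length}(T)>\textbf{length}(S)$, and symmetrize) goes through. Your handling of the $\epsilon$ case is a harmless bit of extra care.
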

\begin{definition}
\label{algerank}
The length of any maximal independent sequence of $\{x_i\}_{i=1}^n\subset\overline{\mathbb{Q}}^*$ is called the \emph{rank} of $x_1,\ldots,x_n$, denoted by $\textbf{\emph{rank}}(x)$.
\end{definition}
\begin{definition}
\label{rx}
Define $\mathcal{R}_x=\{v\in\mathbb{Z}^n|x^v=1\}$ with $x=(x_1,\ldots,x_n)^T$ in $({\overline{\mathbb{Q}}^*})^n$ and $x^v = x_1^{k_1}\cdots x_n^{k_n}$ if $v = (k_1,\ldots,k_n)^T$. $\mathcal{R}_x$ is called \emph{the exponent lattice} of $x$. The elements of $\mathcal{R}_x\backslash\{\mathbf{0}\}$ are called \emph{dependent vectors} of $x$. A $\mathbb{Z}$-independent finite set $\{\alpha_1,\ldots,\alpha_r\}\subset\mathcal{R}_x$ is called \emph{a basis} of $\mathcal{R}_x$ if $\forall v\in\mathcal{R}_x, \exists k_1,\ldots,k_r\in\mathbb{Z}$ s.t. $v=\sum_{i=1}^r k_i\alpha_i$, where $\mathbb{Z}$-independent means $\ell_1\alpha_1+\cdots+\ell_r\alpha_r=\mathbf{0}$ implies $\ell_1=\cdots=\ell_r=0$ for integers $\ell_1,\ldots,\ell_r$.
\end{definition}

\section{Main results}\label{sec:main}
\subsection{Main Theorems for Constructing a Basis}
For a vector $v=(z_1,\ldots,z_n)^T\in \mathbb{C}^n$, if $1\leq k\leq n$, denote $v|k= (z_1,\ldots,z_k)^T$ and $v(k)$ the coordinate of $v$ indexed by $k$. Denote $[j]=\{1,\ldots,j\}$ for integer $j\geq 1$. Set $I\subset[n]$, denote by $\mathbb{Z}^I$ the lattice consists of all integer vectors whose coordinates are indexed by $I$. In addition, we simplify the notation $\mathbb{Z}^{[j]}$ to $\mathbb{Z}^j$. For any two subsets $I\subset J\subset [n]$, each vector in $\mathbb{Z}^I$ is considered to be a  vector in $\mathbb{Z}^J$ with extra coordinates, indexed by $J\backslash I$, equal to $0$. This ambiguity does not lead to any troubles but brings us convenience. For $x=(x_1,\ldots,x_n)^T$$\in(\overline{\mathbb{Q}}^*)^n$ and $v\in\mathbb{Z}^{I}$, define $x^v=\prod_{i\in I}x_i^{v(i)}$. This value remains the same if one considers $v$ to be a vector in $\mathbb{Z}^J$,  for  any $J\supset I$, and computes as $x^v=\prod_{j\in J}x_j^{v(j)}$.
\begin{definition}
\label{bv}
For $x\in(\overline{\mathbb{Q}}^*)^n$ and each  $j\in[n]$, define
 {(i)} $V_j=\{v\in\mathbb{Z}^j| v(j)>0,\; x^v=1\}$,
{(ii)} $J=\{j\in [n]|V_j\neq\emptyset\}$,
{(iii)}
$\textbf{\emph{min}}_j=\min_{v\in V_j} v(j)$
 and {(iv)}
$\tilde{V}_j=\{v\in V_j|v(j)=\textbf{\emph{min}}_j\}$ for each $j\in J$. 
Then $\tilde{V}_j\neq\emptyset$ for any $j\in J$. A set of vectors of the form $
\{u_j\}_{j\in J}$, where $u_j\in\tilde{V}_j$ for each $j\in J$,
is called a set of \emph{basis vectors}.
\end{definition}
The name is justified by Theorem \ref{basisthm} that is to come.
\begin{proposition}
\label{divide}
$\forall j\in J$, $\forall v \in V_j$, \textbf{\emph{min}}$_j=u_j(j)$ divides $v(j)$.
\end{proposition}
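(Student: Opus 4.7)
The plan is to use the Euclidean division algorithm in the most direct way possible, exploiting the fact that $V_j$ is closed under integer linear combinations with other elements of $\mathcal{R}_x$.

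First I would fix $j \in J$, pick any $u_j \in \tilde{V}_j$ (so $u_j(j) = \textbf{min}_j =: m$ and $x^{u_j}=1$), and take an arbitrary $v \in V_j$ with $v(j) = q > 0$. By the division algorithm on integers, write $q = am + r$ with $a \in \mathbb{Z}$ and $0 \le r < m$. The aim is to rule out $r > 0$.

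Next I would form the integer vector $w := v - a\,u_j \in \mathbb{Z}^j$. By construction its $j$-th coordinate is $w(j) = q - am = r$, and since both $u_j$ and $v$ lie in the exponent lattice one has $x^w = x^v \cdot (x^{u_j})^{-a} = 1 \cdot 1 = 1$. Hence if $r > 0$ we would obtain $w \in V_j$ with $w(j) = r < m$, directly contradicting the minimality $m = \textbf{min}_j$. Therefore $r = 0$, i.e.\ $m \mid q$, which is exactly the claim.

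There is no real obstacle here: the only thing to check carefully is that $v - a u_j$ is a legitimate element of $\mathbb{Z}^j$ (it is, since both vectors sit in $\mathbb{Z}^j$ and scalar multiplication and subtraction preserve the index set) and that the multiplicativity $x^{v - a u_j} = x^v (x^{u_j})^{-a}$ is valid, which is immediate from the definition $x^w = \prod_{i \le j} x_i^{w(i)}$ and the fact that all $x_i$ are nonzero so negative exponents make sense. The assumption $j \in J$ is used only to guarantee $\tilde V_j \neq \emptyset$ so that such a $u_j$ exists in the first place.
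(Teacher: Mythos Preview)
Your proof is correct and is essentially identical to the paper's own argument: both use Euclidean division to write $v(j)=a\cdot\textbf{min}_j+r$ with $0\le r<\textbf{min}_j$, form $v-a\,u_j$, and derive a contradiction to the minimality of $\textbf{min}_j$ when $r>0$. Your version is simply more explicit about why $x^{v-a u_j}=1$ and why $u_j$ exists, but there is no difference in method.
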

\begin{proof}
Write $v_j(j)= q\cdot \textbf{min}_j+r$, where $0\leq r<\textbf{min}_j$. Assume that $0<r<\textbf{min}_j$, then $(v_j-qu_j)(j)=r>0$ and $(v_j-qu_j)\in V_j$. Then $r<\textbf{min}_j$ contradicts Definition \ref{bv} (iii).
\end{proof}
In the following, we fix a certain set of basis vectors $\{u_j\}_{j\in J}$ defined in Definition \ref{bv}. Define inductively $n+1$ sets of integer vectors as follows:
\begin{equation}
\begin{array}{l}
{B_0} = \emptyset ,\\
{B_j} = \left\{ {\begin{array}{*{20}{c}}
{{B_{j - 1}},\;\;\;\;\;\;\;\;\;\;\;\;\;\;\;\;\;{\rm{if}}\;{V_j}  =  \emptyset},\\
{{B_{j - 1}} \cup \{ {u_j}\} ,\;\;\;\;{\rm{if}}\;{V_j} \ne \emptyset },
\end{array}} \right.
\end{array}
\label{Basis}
\end{equation}
\noindent{}for each $j\in[n]$. We have the following main theorem:
\begin{theorem}
\label{basisthm}
Consider $B_n$ to be a subset of$\;\mathbb{Z}^n$, then $B_n=\{u_j\}_{j\in J}$ is a basis of the exponent lattice $\mathcal{R}_x$.
\end{theorem}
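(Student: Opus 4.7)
The plan is to verify the three things required: $B_n \subset \mathcal{R}_x$, $B_n$ is $\mathbb{Z}$-independent, and $B_n$ spans $\mathcal{R}_x$. Containment is immediate: each $u_j \in V_j$ satisfies $x^{u_j}=1$, and viewing $u_j \in \mathbb{Z}^j$ as a vector in $\mathbb{Z}^n$ by padding with zeros in coordinates $j+1,\ldots,n$ leaves $x^{u_j}=1$ unchanged, so $u_j \in \mathcal{R}_x$.

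For $\mathbb{Z}$-independence, I would exploit the staircase structure of $\{u_j\}_{j\in J}$. Each $u_j$ has $u_j(j)=\textbf{min}_j>0$ and $u_j(i)=0$ for $i>j$. Suppose $\sum_{j\in J}\ell_j u_j=\mathbf{0}$ with not all $\ell_j$ zero, and let $j^\star$ be the largest $j\in J$ with $\ell_j\neq 0$. Reading off the $j^\star$-th coordinate of the sum, every term with $j<j^\star$ contributes $0$ (since $u_j(j^\star)=0$) and every term with $j>j^\star$ vanishes by choice of $j^\star$, leaving $\ell_{j^\star}\textbf{min}_{j^\star}=0$, a contradiction since $\textbf{min}_{j^\star}>0$.

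The spanning property is the main step. I would argue by induction on the largest index $k$ with $v(k)\neq 0$, where $v\in\mathcal{R}_x$ is arbitrary. The base case $v=\mathbf{0}$ is trivial. Otherwise, since the coordinates of $v$ beyond $k$ vanish, one has $x^{v|k}=x^v=1$. By replacing $v$ with $-v$ if necessary (which does not affect the conclusion since $\mathcal{R}_x$ and the lattice generated by $B_n$ are both closed under negation), assume $v(k)>0$; then $v|k\in V_k$, so $k\in J$ and Proposition \ref{divide} gives $u_k(k)\mid v(k)$. Writing $v(k)=q\cdot u_k(k)$ with $q\in\mathbb{Z}$, the vector $v-qu_k$ lies in $\mathcal{R}_x$ (as a difference of two of its elements) and its coordinates in positions $k,k+1,\ldots,n$ all vanish, so its top index is strictly less than $k$. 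The induction hypothesis expresses $v-qu_k$ as an integer combination from $B_n$, and adding $qu_k$ does the same for $v$.

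The main obstacle is the spanning step, and it hinges entirely on Proposition \ref{divide}: without the divisibility of $v(k)$ by $\textbf{min}_k$, the reduction $v\mapsto v-qu_k$ would not kill the top coordinate within $\mathbb{Z}$. Once that is in hand, the argument is a straightforward induction, and the triangular shape of the basis vectors makes independence automatic.
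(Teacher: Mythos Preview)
Your proof is correct and follows essentially the same approach as the paper: induction on the largest nonzero coordinate index of $v$, using Proposition~\ref{divide} to peel off a multiple of $u_k$ and strictly lower that index, with the triangular (staircase) shape of $\{u_j\}$ giving $\mathbb{Z}$-independence. The paper's version is terser---it absorbs the sign issue into ``$u_k(k)$ divides $|v(k)|$'' rather than passing to $-v$, and it calls the independence ``obvious'' rather than spelling out the top-coordinate argument---but the substance is identical.
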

\begin{proof}
If $B_n=\emptyset$, then $\mathbb{R}_x=\{\mathbf{0}\}$ and the conclusion holds. Define $\textbf{Tail}(v)=\max\{i\in[n]|v(i)\neq0\}$. We use induction on $\textbf{Tail}(v)$ to prove that $\forall v\in\mathcal{R}_x\backslash\{\mathbf{0}\}, v\in\textbf{span}(B_n)$. If $\textbf{Tail}(v)=1$, $x_1$ is a root of unity and $\textbf{Order}(x_1)|v(1)$. Thus $v(1)\in\textbf{span}\{u_1\}\subset B_n$. Suppose $\textbf{Tail}(v)=k\geq2$ and $\forall v\;(v\in\mathcal{R}_x\wedge\textbf{Tail}(v)<k)\Rightarrow v\in\textbf{span}(B_n)$. Then by Proposition \ref{divide}, $u_k(k)$ divides $|v(k)|$. Let $\tilde{v}= v-\frac{v(k)}{u_k(k)}\cdot u_k\in\mathcal{R}_x$, then $\textbf{Tail}(\tilde{v})<k$. Therefore $\tilde{v}\in\textbf{span}(B_n)$ and $v=\tilde{v}+v(k)/u_k(k)\cdot u_k\in\textbf{span}(B_n)$. Moreover, the vectors in $B_n$ are obviously $\mathbb{Z}$-independent.
\end{proof}

 Theorem \ref{basisthm} indicates how one  constructs inductively a basis of $\mathcal{R}_x$. The point is to compute $u_j$. Before that, we digress a little to see the main idea to obtain a maximal independent sequence of the numbers $x_1,x_2,\ldots,x_n$.

Set:
\begin{equation}\begin{array}{l}
{S_0} =  \epsilon ,\\
{S_j} =  \left\{ {\begin{array}{*{20}{l}}
{{S_{j - 1}},\;\;\;\;\;\;\;\;{\rm{if}}\;{S_{j - 1}},{x_j}\;{\rm{is\; dependent,}}}\;\\
{{S_{j - 1}},{x_j},\;\;{\rm{if}}\;{S_{j - 1}},{x_j}\;{\rm{is \;independent,}}}
\end{array}} \right.
\end{array}\label{Sj}\end{equation}
\noindent{}\noindent{}for each $j\in[n]$. Here $S_{j - 1},{x_j}$ is the sequence obtained by attatching $x_j$ to the tail of the sequence $S_{j-1}$. The sequence $\epsilon,x_j$, for instance, means $x_j$.  Also we define for $j\in[n]$:
\begin{equation}\begin{array}{l}
{I_0}  =  \emptyset,\\
{I_j}  =  \left\{ {\begin{array}{*{20}{l}}
{{I_{j - 1}},\;\;\;\;\;\;\;\;\;\;\;\;\;{\rm{if}}\;{S_{j - 1}},{x_j}\;{\rm{is\; dependent,}}}\\
{{I_{j - 1}}\cup\{j\},\;\;{\rm{if}}\;{S_{j - 1}},{x_j}\;{\rm{is \;independent.}}}
\end{array}} \right.
\end{array}
\label{Ij}\end{equation}
\begin{theorem}
\label{misthm}
$S_n$ is a maximal independent sequence of $x_1,\ldots,x_n$.
\end{theorem}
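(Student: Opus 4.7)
The plan is to verify the two defining properties of a maximal independent sequence: that $S_n$ is independent, and that any strictly longer supersequence of $S_n$ drawn from $x_1,\ldots,x_n$ is dependent.

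For the first property I would proceed by a straightforward induction on $j\in\{0,1,\ldots,n\}$ showing that every $S_j$ is multiplicatively independent. The base case $S_0=\epsilon$ is independent by the note following Definition 1. The inductive step is immediate from the construction in (\ref{Sj}): if $S_{j-1},x_j$ is independent we set $S_j=S_{j-1},x_j$, and otherwise $S_j=S_{j-1}$ inherits independence from the induction hypothesis. In particular $S_n$ is independent.

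The key intermediate fact I would establish next is that every $x_j$ (for $j\in[n]$) can be pseudo-multiplicatively represented by $S_n$. If $j\in I_n$ then $x_j$ occurs in $S_n$ and the representation is trivial (exponent one on itself). Otherwise $x_j$ was not appended, so $S_{j-1},x_j$ is dependent; since $S_{j-1}$ is independent by the previous paragraph, any witnessing relation $\prod_{y\in S_{j-1}} y^{e_y}\cdot x_j^{k}=1$ must have $k\neq 0$, i.e.\ $x_j^{-k}=\prod_{y\in S_{j-1}}y^{e_y}$. Since $S_{j-1}$ is a subsequence of $S_n$, this exhibits the desired pseudo-multiplicative representation of $x_j$ by $S_n$.

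The second property of Definition \ref{mis} then follows quickly from Lemma \ref{rank}. Suppose $\textbf{length}(S_n)<n$ and let $T$ be any supersequence of $S_n$ within $x_1,\ldots,x_n$ with $\textbf{length}(T)>\textbf{length}(S_n)$. Every entry of $T$ is some $x_j$, hence by the previous paragraph admits a pseudo-multiplicative representation by $S_n$. Since $\textbf{length}(T)>\textbf{length}(S_n)$, Lemma \ref{rank} (applied with the role of $S$ played by $T$ and the role of $T$ played by $S_n$) forces $T$ to be multiplicatively dependent, which is exactly what Definition \ref{mis}(ii) requires. I expect the only mildly delicate point to be the argument that the witness of dependence in the construction must carry a nonzero exponent on $x_j$; beyond that the proof is a direct assembly of the inductive construction with Lemma \ref{rank}.
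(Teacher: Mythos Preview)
Your proof is correct. The paper actually states Theorem \ref{misthm} without proof, presumably regarding it as immediate from the inductive construction (\ref{Sj}), so there is no explicit argument to compare against.

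That said, your route through Lemma \ref{rank} is slightly more elaborate than necessary. A more direct argument for condition (ii) of Definition \ref{mis} is available: any strictly longer supersequence $T$ of $S_n$ within $x_1,\ldots,x_n$ must contain some $x_j$ with $j\notin I_n$; by construction the sequence $S_{j-1},x_j$ is dependent, and since $S_{j-1}$ is a subsequence of $S_n$ (hence of $T$), the witnessing relation extends to a nontrivial relation on $T$ by padding with zero exponents. This avoids the detour through pseudo-multiplicative representations and Lemma \ref{rank}. Your approach, on the other hand, explicitly establishes the intermediate fact that every $x_j$ is pseudo-multiplicatively represented by $S_n$, which is essentially the content of Proposition \ref{pmr} specialized to this $S_n$ and is useful in its own right (it reappears implicitly in the proof of Proposition \ref{connection}). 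Both arguments are short; the difference is purely one of packaging.
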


The following proposition connects Theorem \ref{basisthm} and \ref{misthm}.
\begin{proposition}
\label{connection}
$V_j\neq\emptyset\Leftrightarrow$ the sequence $S_{j-1},x_{j}$ is dependent.
\end{proposition}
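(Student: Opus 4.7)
The statement is an equivalence, so I would split the proof into the two directions, with the right-to-left direction being essentially bookkeeping and the left-to-right direction requiring Proposition \ref{pmr}.

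For the $(\Leftarrow)$ direction, I would assume $S_{j-1},x_j$ is multiplicatively dependent. Writing out a nontrivial integer relation, the exponent attached to $x_j$ must be nonzero: if it were zero, the relation would witness a dependence among the entries of $S_{j-1}$, contradicting the fact that $S_{j-1}$ is (by Theorem \ref{misthm} applied to $x_1,\ldots,x_{j-1}$) multiplicatively independent. After possibly negating the whole relation, the coefficient of $x_j$ becomes strictly positive, and extending by zeros on the coordinates of $[j-1]$ corresponding to indices absent from $S_{j-1}$ yields a vector in $V_j$.

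For the $(\Rightarrow)$ direction, suppose $V_j\ne\emptyset$ and pick $v\in V_j$, so $x_1^{v(1)}\cdots x_{j-1}^{v(j-1)}x_j^{v(j)}=1$ with $v(j)>0$. Since $S_{j-1}$ is a maximal independent sequence of $x_1,\ldots,x_{j-1}$, Proposition \ref{pmr} provides, for each $i\in[j-1]$, a nonzero integer $d_i$ and a monomial expression $x_i^{d_i}=y_1^{a_{i,1}}\cdots y_m^{a_{i,m}}$ in the entries $y_1,\ldots,y_m$ of $S_{j-1}$. I would then raise the relation $x^v=1$ to the power $D:=\prod_{i=1}^{j-1}d_i$, distribute the exponent, and substitute each $x_i^{D\cdot v(i)}=\bigl(x_i^{d_i}\bigr)^{(D/d_i)\cdot v(i)}$ by its expression in the $y_\ell$'s. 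The result is a multiplicative relation of the form $y_1^{c_1}\cdots y_m^{c_m}\cdot x_j^{D\cdot v(j)}=1$ with the exponent of $x_j$ equal to $D\cdot v(j)\ne 0$; hence $S_{j-1},x_j$ is multiplicatively dependent.

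The only mildly delicate point is ensuring the exponent of $x_j$ in the final relation is genuinely nonzero in the $(\Rightarrow)$ direction, but this is immediate because both $D$ and $v(j)$ are nonzero. The rest is a direct combination of Proposition \ref{pmr} with the definitions of $V_j$ and of the sequences $S_j$ in \eqref{Sj}, so I do not anticipate any real obstacle beyond careful indexing.
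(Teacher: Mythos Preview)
Your proof is correct and follows essentially the same approach as the paper's: the $(\Rightarrow)$ direction is identical (raise a relation coming from $V_j$ to the product of the pseudo-representation exponents supplied by Proposition~\ref{pmr}), and your $(\Leftarrow)$ direction is a slight streamlining of the paper's, which also routes through Proposition~\ref{pmr} rather than arguing directly from the independence of $S_{j-1}$. Both versions of $(\Leftarrow)$ yield a vector supported on $I_{j-1}\cup\{j\}$, which the paper subsequently uses to conclude $U_j\neq\emptyset$.
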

\begin{proof}
``$\Rightarrow$\text{'':} Set $v=(-k_1,-k_2,\ldots,-k_{j-1},k_j)^T$$\in V_j$, \emph{s.t.} $k_j>0$ and $x^v=1$, thus $x_j^{k_j}=x_1^{k_1}x_2^{k_2}\cdots x_{j-1}^{k_{j-1}}$. Set
$
{S_{j - 1}}\text{to be }{x_{{i_1}}},{x_{{i_2}}}, \cdots ,{x_{{i_m}}},
$
as defined in (\ref{Sj}). By Theorem \ref{misthm} and Proposition \ref{pmr}, each of $x_1,\ldots,x_{j-1}$ can be pseudo-multiplicatively represented by $S_{j-1}$:
\[\left\{ {\begin{array}{l}
\;\;\;{x_1^{{r_1}} = x_{i_1}^{{a_{11}}} \cdots x_{i_m}^{{a_{m1}}}},\\
\;\;\;\;\;\;\;\;\;\; \vdots \\
{x_{j-1}^{{r_{j-1}}} = x_{i_1}^{{a_{1,{j-1}}}} \cdots x_{i_m}^{{a_{m,{j-1}}}}},
\end{array}} \right.\]
\noindent{}where $r_1,r_2,\ldots,r_{j-1}\in\mathbb{Z}^*$, $a_{st}\in\mathbb{Z}$, for $s\in[m],\;t\in[j-1]$. Define $\Pi=r_1 r_2\cdots r_{j-1}$ and $\pi_{t}=\Pi/r_{t}$, then
\[\begin{array}{l}
x_j^{\Pi  \cdot {k_j}} = x_1^{\Pi  \cdot {k_1}}x_2^{\Pi  \cdot {k_2}} \cdots x_{j - 1}^{\Pi  \cdot {k_{j - 1}}}\\

\;\;\;\;\;\;\;\;\;\;{\rm{ =  }}\;x_1^{{r_1} \cdot {k_1} \cdot {\pi _1}}x_2^{{r_2} \cdot {k_2} \cdot {\pi _2}} \cdots x_{j - 1}^{{r_{j - 1}} \cdot {k_{j - 1}} \cdot {\pi _{j - 1}}}\\

\;\;\;\;\;\;\;\;\;\; = {(x_{{i_1}}^{{a_{11}}} \cdots x_{{i_m}}^{{a_{m1}}})^{{k_1} \cdot {\pi _1}}} \cdots {(x_{{i_1}}^{{a_{1,j - 1}}} \cdots x_{{i_m}}^{{a_{m,j - 1}}})^{{k_{j - 1}} \cdot {\pi _{j - 1}}}}
\end{array}\]
\noindent{}with $\Pi\cdot k_j\neq0$, thus the sequence $S_{j-1},x_j$ is dependent.

``$\Leftarrow$'': Since $S_{j-1}$ is multiplicatively independent, $S_{j-1}$ is a maximal independent sequence of $S_{j-1},x_j$. According to Proposition \ref{pmr}, $x_j$ can be pseudo-multiplicative represented by $S_{j-1}$: $x_j^{l} = x_{i_1}^{l_1} \cdots x_{i_m}^{{l_m}}$
\noindent{}with $0\neq l, l_1,\ldots,l_m\in\mathbb{Z}$. Define $v\in \mathbb{Z}^{I_{j-1}\cup\{j\}}$ by $v(i_s)=l_s,\;$for $ s\in[m]$ and $v(j)=-l$. Then $v$ or $-v\in V_j$, hence $V_j\neq\emptyset$.
\end{proof}
Proposition \ref{connection} implies that \[(|B_j|-|B_{j-1}|)+(\textbf{length}(S_j)-\textbf{length}(S_{j-1}))=1\] for any $j$, thus $|B_n|-|B_0|+\textbf{length}(S_n)-\textbf{length}(S_{0})=n.$ Now $|B_n|=\textbf{rank}(\mathcal{R}_x)$ by Theorem \ref{basisthm}, $\textbf{length}(S_n)=\textbf{rank}(x)$ by Theorem \ref{misthm} and $|B_0|=\textbf{length}(S_{0})=0$, thus $\textbf{rank}(\mathcal{R}_x)+\textbf{rank}(x)=n$. We obtain the following theorem.
\begin{theorem}
For $x\in(\overline{\mathbb{Q}}^*)^n$, $\emph{\textbf{rank}}(\mathcal{R}_x)+\emph{\textbf{rank}}(x)=n$.
\end{theorem}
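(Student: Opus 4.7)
My plan is to exploit Proposition \ref{connection} to set up a telescoping identity linking the size of $B_j$ and the length of $S_j$ at each stage of the inductive constructions (\ref{Basis}) and (\ref{Sj}). The key observation is that the condition ``$V_j\neq\emptyset$'' governing whether $B_{j-1}$ grows to $B_{j-1}\cup\{u_j\}$ is logically equivalent, by Proposition \ref{connection}, to the condition ``$S_{j-1},x_j$ is dependent'' governing whether $S_j$ fails to acquire $x_j$. Consequently, in passing from index $j-1$ to index $j$, exactly one of the two objects $B_j$ and $S_j$ gains precisely one element while the other remains unchanged.

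More concretely, I would argue as follows. For each $j\in[n]$, a case split on whether $V_j$ is empty or not shows
\[
(|B_j|-|B_{j-1}|)+(\textbf{length}(S_j)-\textbf{length}(S_{j-1}))=1.
\]
Summing this identity telescopically over $j=1,\ldots,n$, and using the initial values $|B_0|=0$ and $\textbf{length}(S_0)=0$, I obtain
\[
|B_n|+\textbf{length}(S_n)=n.
\]

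To finish, I would invoke the two structural theorems already established. Theorem \ref{basisthm} identifies $B_n$ with a basis of $\mathcal{R}_x$, so $|B_n|=\textbf{rank}(\mathcal{R}_x)$; Theorem \ref{misthm} identifies $S_n$ with a maximal independent sequence of $x_1,\ldots,x_n$, so by Definition \ref{algerank} we have $\textbf{length}(S_n)=\textbf{rank}(x)$. Substituting both equalities into the telescoped identity yields the desired relation $\textbf{rank}(\mathcal{R}_x)+\textbf{rank}(x)=n$.

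I do not expect any substantive obstacle here: all the hard work has been front-loaded into Theorems \ref{basisthm} and \ref{misthm} and into Proposition \ref{connection}. The only point requiring mild care is making sure the empty-lattice case (when $B_n=\emptyset$, i.e.\ $x_1,\ldots,x_n$ multiplicatively independent) and the all-roots-of-unity case (when $S_n=\epsilon$) are both handled uniformly by the convention $|B_0|=\textbf{length}(S_0)=0$, which they are.
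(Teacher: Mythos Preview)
Your proposal is correct and follows essentially the same argument as the paper: use Proposition \ref{connection} to see that exactly one of $B_j$ or $S_j$ grows at each step, telescope to obtain $|B_n|+\textbf{length}(S_n)=n$, and then identify the two summands via Theorem \ref{basisthm} and Theorem \ref{misthm}. There is no substantive difference between your approach and the paper's.
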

Note that  $I_{j-1}\cup\{j\}\subset [j]$, thus $\mathbb{Z}^{I_{j-1}\cup\{j\}}\subset\mathbb{Z}^j$ by convention. Define a subset of $V_j$ :
\begin{equation}
U_j=V_j\cap\mathbb{Z}^{I_{j-1}\cup\{j\}}
\end{equation}
for $j\in[n]$. It is clear, from the  ``$\Leftarrow$'' part of the proof of Proposition \ref{connection}, that $U_j\neq\emptyset$ whenever the sequence $S_{j-1},x_j$ is dependent. Indeed, the vector $v$ or $-v$ constructed there is in $U_j$. Conversely, if $U_j\neq\emptyset$, any vector in $U_j$ provides a dependent vector for the sequence $S_{j-1},x_j$. Hence  $U_j\neq\emptyset$ iff  the sequence $S_{j-1},x_j$ is dependent, which is equivalent to $V_j\neq\emptyset$. One sees $\{j\in[n]\;|\;U_j\neq\emptyset\}=J$ (Definition \ref{bv} (ii)).
\begin{definition}
\label{prebv}
A set of vectors of the form $
\{w_j\}_{j\in J}$, where $w_j\in U_j$ for each $j\in J$,
is called a set of \emph{pre-basis vectors}.
\end{definition}
In the following, we fix a certain pre-basis $\{w_j\}_{j\in J}$. Since $w_j\in U_j\subset V_j$, $u_j(j)$ divides $w_j(j)$ by Proposition \ref{divide}.

\subsection{A Certificate for Multiplicative Independence}

The degree of  an algebraic number $\mathfrak{a}$ is denoted by $\textbf{deg}(\mathfrak{a})$.
\begin{definition}
\label{nondegen}
 A sequence $\{x_i\}_{i=1}^n\subset\overline{\mathbb{Q}}^*$ is called \emph{degenerate} if $[\mathbb{Q}[x_1,x_2,\ldots,x_n]:\mathbb{Q}]<\prod_{i=1}^n\textbf{\emph{deg}}(x_i)$, otherwise \emph{non-degenerate}.
\end{definition}
\begin{theorem}
\label{suff}
If $\{x_i\}_{i=1}^n$$\subset\overline{\mathbb{Q}}^*$ is non-degenerate and $
\prod_{i\in[n]}x_i^{k_i}=1$ for integers $k_1,k_2,\cdots,k_n$, then $\forall i\in [n],\; x_i^{k_i}\in \mathbb{Q}.$
\end{theorem}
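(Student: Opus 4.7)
The plan is to pass to a Galois closure and show that $x_i^{k_i}$ is fixed by the full Galois group over $\mathbb{Q}$, hence rational.

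First, I would convert the non-degeneracy hypothesis into the sharper statement that, for each $i$, the minimal polynomial $m_i$ of $x_i$ over $\mathbb{Q}$ remains irreducible over $F_i := \mathbb{Q}[x_1,\ldots,x_{i-1},x_{i+1},\ldots,x_n]$. This is a tower-of-fields count: one has $[F_i:\mathbb{Q}] \leq \prod_{j \neq i}\mathbf{deg}(x_j)$ and $[\mathbb{Q}[x_1,\ldots,x_n]:F_i] \leq \mathbf{deg}(x_i)$, and the product of the two upper bounds equals $\prod_j \mathbf{deg}(x_j)$. Non-degeneracy forces both inequalities to be equalities, so in particular $[\mathbb{Q}[x_1,\ldots,x_n]:F_i] = \mathbf{deg}(x_i)$, which means the minimal polynomial of $x_i$ over $F_i$ already has the full degree $\mathbf{deg}(x_i)$ and therefore coincides with $m_i$.

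Next, I would take $L$ to be the Galois closure of $\mathbb{Q}[x_1,\ldots,x_n]$ over $\mathbb{Q}$ (which exists and is separable since $\mathrm{char}\,\mathbb{Q}=0$). The extension $L/F_i$ is Galois, and by the previous step $m_i$ remains irreducible over $F_i$ with all its roots in $L$, so $\mathrm{Gal}(L/F_i)$ acts transitively on the $\mathbb{Q}$-conjugates of $x_i$. For any $\sigma \in \mathrm{Gal}(L/F_i)$, applying $\sigma$ to the relation $\prod_j x_j^{k_j} = 1$ fixes every factor with $j \neq i$, so we obtain $\sigma(x_i)^{k_i} = x_i^{k_i}$. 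Combined with transitivity, this yields $y^{k_i} = x_i^{k_i}$ for every $\mathbb{Q}$-conjugate $y$ of $x_i$.

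Finally, for any $\tau \in \mathrm{Gal}(L/\mathbb{Q})$, the image $\tau(x_i)$ is some conjugate of $x_i$, hence $\tau(x_i^{k_i}) = \tau(x_i)^{k_i} = x_i^{k_i}$. Thus $x_i^{k_i}$ is fixed by the entire Galois group and so lies in $\mathbb{Q}$. The only delicate point I anticipate is the opening degree-comparison step: one must be careful to extract strict equality in the tower bound from the non-degeneracy hypothesis and then deduce that the minimal polynomial of $x_i$ does not drop in degree when the base is enlarged from $\mathbb{Q}$ to $F_i$. Once this is in hand, the Galois-theoretic conclusion is essentially a one-line substitution.
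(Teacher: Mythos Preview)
Your proof is correct, but it takes a genuinely different route from the paper's. The paper argues by elementary linear algebra: non-degeneracy means the monomials $x_1^{l_1}\cdots x_n^{l_n}$ with $0\le l_i\le d_i-1$ form a $\mathbb{Q}$-basis of $\mathbb{Q}[x_1,\ldots,x_n]$; each $x_i^{k_i}$ is expanded as $\sum_{l}a_{il}x_i^{l}$, the product of these expansions is set equal to $1$, and coefficient comparison forces $a_{10}\cdots a_{n0}=1$ while every product containing a higher-degree coefficient vanishes, so $a_{ij}=0$ for $j>0$ and $x_i^{k_i}=a_{i0}\in\mathbb{Q}$. Your argument instead extracts from the degree count the structural fact that $m_i$ stays irreducible over $F_i=\mathbb{Q}[x_j:j\ne i]$, then uses transitivity of $\mathrm{Gal}(L/F_i)$ on the roots of $m_i$ to show every $\mathbb{Q}$-conjugate $y$ of $x_i$ satisfies $y^{k_i}=x_i^{k_i}$, whence $x_i^{k_i}$ is Galois-invariant over $\mathbb{Q}$. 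The paper's approach is shorter and entirely elementary, needing no Galois theory; yours is more conceptual, and the intermediate statement that each $m_i$ remains irreducible over the field generated by the other $x_j$ is an independently useful reformulation of non-degeneracy. Both are clean; the only caution in yours is the opening tower-of-degrees step, which you have handled correctly.
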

\begin{proof}
Suppose $\textbf{deg}(x_i)=d_i$ for $i\in[n]$. The non-degenerate condition implies that the set
$
\{x_1^{l_1}x_2^{l_2}\cdots x_n^{l_n}|\forall i \in [n],\;0\leq l_i\leq d_i-1\}
$
is a basis of the $\mathbb{Q}$-linear space $\mathbb{Q}[x_1,x_2,\ldots,x_n]$. For each $i\in[n]$, a basis of the $\mathbb{Q}$-linear space $\mathbb{Q}[x_i]$ is
$
\{1,x_i,x_i^2,\ldots,x_i^{d_i-1}\}.
$
Expanding $x_i^{k_i}$ along that basis we obtain:
\begin{equation}
x_i^{k_i}= a_{i0}+a_{i1}x_i+a_{i2}x_i^2+\cdots+a_{i,d_i-1}x_i^{d_i-1},
\label{span}
\end{equation}
where $a_{ij}\in \mathbb{Q}$, $i\in[n]$, $j+1\in[d_i]$. Then $\prod_{i\in[n]}x_i^{k_i}=1$ means:
\begin{equation}
\sum_{\forall i\in [n],\;0\leq l_i \leq d_i-1}a_{1l_1}a_{2l_2}\cdots a_{nl_n}x_1^{l_1}x_2^{l_2}\cdots x_n^{l_n}=1.
\label{compare}
\end{equation}
Since $\{x_1^{l_1}x_2^{l_2}\cdots x_n^{l_n}|\forall i \in [n], l_i+1\in[d_i]\}$ is a basis, two sides of (\ref{compare}) have exactly the same coefficients, {\it i.e.}
\begin{equation}\left\{ {\begin{array}{*{20}{c}}
a_{10}a_{20}\cdots a_{n0}=1,\;\;\;\;\;\;\;\;\;\;\;\;\;\;\;\;\;\;\;\;\;\;\;\;\;\;\;\;\;\;\;\;\\
a_{1l_1}a_{2l_2}\cdots a_{nl_n}=0,\;\text{if}\; \exists i\in[n],\;l_i>0,
\end{array}} \right.\label{coe}\end{equation}
Consider $a_{ij}\in\mathbb{Q}$ with $i\in[n]$, $j \in[d_i-1]$, then $a_{10}a_{20}\cdots a_{i-1,0}a_{ij}\cdot$\\$a_{i+1,0}\cdots a_{n0}=0$ by the $2^{\text{nd}}$ equality of (\ref{coe}). Thus $a_{ij}=0$ follows from the $1^{\text{st}}$ equality of (\ref{coe}). Then (\ref{span}) is reduced to $x_i^{k_i}=a_{i0}\in \mathbb{Q}$.
\end{proof}
\begin{definition}
\label{rorn}
Nonzero algebraic number $\mathfrak{a}$ is a \emph{root of rational} if $\exists k\in\mathbb{Z},k>0,\mathfrak{a}^k\in \mathbb{Q}$. The smallest such $k$ is \emph{the rational order} of $\mathfrak{a}$, denoted by $\textbf{\emph{Rorder}}(\mathfrak{a})$. For $\mathfrak{a}$ not a root of rational, define $\textbf{\emph{Rorder}}(\mathfrak{a})=0$. For convenience, define $\textbf{\emph{R}}(\mathfrak{a})= \mathfrak{a}^{\textbf{\emph{Rorder}}(\mathfrak{a})}$.
\end{definition}
\noindent{\emph{Note:}} If $\mathfrak{a}^\ell\in\mathbb{Q}$ for integer $\ell$, then $\textbf{Rorder}(\mathfrak{a})$ divides $\ell$.
\begin{corollary}
\label{suffcor}
If $x_1,\ldots,x_n$ satisfy non-degenerate condition in Definition \ref{nondegen} and none of them is a root of rational, then $x_1,\ldots,x_n$ are multiplicatively independent.
\end{corollary}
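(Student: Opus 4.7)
The plan is to proceed by contradiction, reducing immediately to Theorem \ref{suff}. Suppose for contradiction that $x_1,\ldots,x_n$ are multiplicatively dependent. Then there exist integers $k_1,\ldots,k_n$, not all zero, with $\prod_{i\in[n]} x_i^{k_i}=1$. Fix some index $i_0\in[n]$ with $k_{i_0}\neq 0$; the existence of such an $i_0$ is exactly the content of ``not all zero''.

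Next I would invoke Theorem \ref{suff} directly: since the sequence is non-degenerate by assumption, every factor $x_i^{k_i}$ lies in $\mathbb{Q}$, and in particular $x_{i_0}^{k_{i_0}}\in\mathbb{Q}$. A small routine check then takes care of the sign of $k_{i_0}$: if $k_{i_0}>0$ we are already done, and if $k_{i_0}<0$ the reciprocal $x_{i_0}^{-k_{i_0}}=1/x_{i_0}^{k_{i_0}}$ is also rational (and nonzero, since $x_{i_0}\in\overline{\mathbb{Q}}^*$). Either way, there is a positive integer $k=|k_{i_0}|$ with $x_{i_0}^{k}\in\mathbb{Q}$, so by Definition \ref{rorn} the number $x_{i_0}$ is a root of rational. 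This contradicts the hypothesis that none of the $x_i$ is a root of rational, completing the argument.

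There is essentially no hard step here; the corollary is a one-line contrapositive of Theorem \ref{suff} combined with Definition \ref{rorn}. The only mild subtlety is taking care of the sign of the exponent $k_{i_0}$ so that one produces a \emph{positive} exponent $k$ witnessing that $x_{i_0}$ is a root of rational, which is what Definition \ref{rorn} requires.
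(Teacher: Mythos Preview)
Your proof is correct and follows essentially the same approach as the paper: apply Theorem \ref{suff} to force each $x_i^{k_i}\in\mathbb{Q}$, then use the ``not a root of rational'' hypothesis to conclude each $k_i=0$. The paper phrases it directly rather than by contradiction and omits the sign check you include, but the argument is the same.
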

\begin{proof}
Set $x_1^{k_1}\cdots x_n^{k_n}=1$ for integers $k_j$. By Theorem \ref{suff},  $\forall i\in[n]$, $x_i^{k_i}\in \mathbb{Q}$. Since $x_i$ is not a root of rational, $k_i=0,\forall i\in[n]$.
\end{proof}
More generally, set \begin{equation} P=\{i\in[n]\;|\;x_i \text{ is a root of rational}\}=\{i_1,i_2,\ldots,i_s\}.\label{reducedequation}\end{equation} Suppose the non-degenerate condition holds, then $\prod_{i\in[n]}x_i^{k_i}=1$ implies $k_i=0$ for $i\notin P$. The problem is reduced to solving the equation $x_{i_1}^{k_{i_1}}x_{i_2}^{k_{i_2}}\cdots x_{i_s}^{k_{i_s}}=1$. In fact, $x_{i_1},x_{i_2},\ldots, x_{i_s}$ are  dependent iff $\textbf{R}(x_{i_1}),\textbf{R}(x_{i_2}),\ldots,\textbf{R}(x_{i_s})$ (Definition \ref{rorn}) are. More generally, for each $i\in [n]$, let $x_i^{{1}/{q_i}}$ be a certain $q_i$-th root of $x_i$. Formally define $x_i^{{p_i}/{q_i}}= (x_i^{{1}/{q_i}})^{p_i}$, for integers $q_i>0,p_i\neq 0$, then
\begin{proposition}
\label{scale}
$x_1^{{p_1}/{q_1}},x_2^{{p_2}/{q_2}},\ldots, x_n^{{p_n}/{q_n}}$ are multiplicatively dependent $\Leftrightarrow x_1,x_2,\ldots,x_n$ are multiplicatively dependent.
\end{proposition}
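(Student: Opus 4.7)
The plan is to reduce each direction to a routine clearing-of-denominators calculation. Write $y_i := x_i^{1/q_i}$ for the fixed $q_i$-th root chosen in the definition preceding the proposition, so that $y_i^{q_i} = x_i$ and $x_i^{p_i/q_i} = y_i^{p_i}$. Every multiplicative relation on $\{x_i^{p_i/q_i}\}$ then becomes a relation on $\{y_i^{p_i}\}$, and passing between relations on $\{x_i\}$ and relations on $\{y_i^{p_i}\}$ amounts to raising the identity to a suitable integer power.

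For the ``$\Leftarrow$'' direction, I would start with a nontrivial integer relation $\prod_i x_i^{k_i} = 1$ and rewrite it as $\prod_i y_i^{q_i k_i} = 1$. Raising to the nonzero integer $N := \prod_j p_j$ turns the exponent of $y_i$ into $q_i k_i N = p_i \cdot \bigl(q_i k_i \prod_{j\neq i} p_j\bigr)$, so setting $m_i := q_i k_i \prod_{j\neq i} p_j$ yields $\prod_i (x_i^{p_i/q_i})^{m_i} = \prod_i y_i^{p_i m_i} = 1$. Nontriviality of $(m_i)$ follows from that of $(k_i)$, since $q_i > 0$ and every $p_j \neq 0$.

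For the ``$\Rightarrow$'' direction, I would begin from $\prod_i (x_i^{p_i/q_i})^{m_i} = \prod_i y_i^{p_i m_i} = 1$ with $(m_i)$ not identically zero, and raise both sides to $Q := \prod_j q_j$. Since $q_i \mid Q$, the exponent of $y_i$ becomes $p_i m_i Q = q_i \cdot (p_i m_i Q/q_i)$, so $\prod_i x_i^{k_i} = 1$ where $k_i := p_i m_i Q/q_i$ is an integer. Again $k_i = 0$ iff $m_i = 0$, so the relation among $x_1,\ldots,x_n$ is nontrivial.

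There is no substantive obstacle here; the one point worth verifying is that the ambiguity in the choice of $q_i$-th root $y_i$ is harmless, because only the identity $y_i^{q_i} = x_i$ is used and this holds for every branch once a branch is fixed. No root-of-unity bookkeeping enters, and the argument is symmetric in the two directions.
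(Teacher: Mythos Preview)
Your proof is correct and essentially identical to the paper's: in each direction you raise the given relation to the product of the $p_j$'s (for ``$\Leftarrow$'') or the product of the $q_j$'s (for ``$\Rightarrow$''), obtaining the exponents $m_i = q_i k_i \prod_{j\neq i} p_j$ and $k_i = p_i m_i \prod_{j\neq i} q_j$, which are exactly the paper's $q_i\tilde{p}_ik_i$ and $\tilde{q}_ip_ik_i$. The only cosmetic difference is your use of the intermediate variables $y_i = x_i^{1/q_i}$, which makes the exponent bookkeeping slightly more transparent but does not alter the argument.
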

\begin{proof}
\emph{``$\Leftarrow$'':}  Set $(k_1,k_2,\ldots,k_n)^T\in \mathbb{Z}^n\backslash\{\mathbf{0}\}$ and $x_1^{k_1}x_2^{k_2}\cdots x_n^{k_n}=1$. Define $p=p_1p_2\cdots p_n$ and $\tilde{p}_i=p/p_i$, then $p=p_i\tilde{p}_i$ and
\[(x_1^{k_1}x_2^{k_2}\cdots x_n^{k_n})^p=x_1^{p_1\tilde{p}_1k_1}x_2^{p_2\tilde{p}_2k_2}\cdots x_n^{p_n\tilde{p}_nk_n}=1.\]
Thus $(x_1^{{p_1}/{q_1}})^{q_1\tilde{p}_1k_1}(x_2^{{p_2}/{q_2}})^{q_2\tilde{p}_2k_2}\cdots(x_n^{{p_n}/{q_n}})^{q_n\tilde{p}_nk_n}=1.$ Since $\exists i\in[n], k_{i}\neq0, q_{i}>0$ and $\tilde{p}_{i}\neq 0$, $q_{i}\tilde{p}_{i}k_i\neq0$. We find a dependent vector for $x_1^{{p_1}/{q_1}},x_2^{{p_2}/{q_2}},\ldots, x_n^{{p_n}/{q_n}}$.

\noindent\emph{``$\Rightarrow$'':} Set  $(k_1,k_2,\ldots,k_n)^T\in \mathbb{Z}^n\backslash{\{\mathbf{0}\}}$, such that  \begin{equation}(x_1^{p_1/q_1})^{k_1}(x_2^{p_2/q_2})^{k_2}\cdots (x_n^{p_n/q_n})^{k_n}=1.\label{relation}\end{equation} Define $q= q_1q_2\cdots q_n$ and $\tilde{q}_i=q/q_i$, then $q=q_i\tilde{q}_i$. Taking power to $q$ for both sides  of (\ref{relation}), we have
 \[(x_1^{1/q_1})^{q_1\tilde{q}_1p_1k_1}(x_2^{1/q_2})^{q_2\tilde{q}_2p_2k_2}\cdots (x_n^{1/q_n})^{q_n\tilde{q}_np_nk_n}=1.\]
Thus $x_1^{\tilde{q}_1p_1k_1}x_2^{\tilde{q}_2p_2k_2}\cdots x_n^{\tilde{q}_np_nk_n}=1$. Suppose $k_i\neq0$ for an $i\in[n]$. Note that $\tilde{q}_{i}>0$ and $p_{i}\neq0$, thus $\tilde{q}_{i}p_{i}k_{i}\neq0$. We obtain a dependent vector for $x_1,\ldots,x_n$.
\end{proof}

In the proof of Proposition \ref{scale}, from a dependent vector for any $x_1^{{p_1}/{q_1}},x_2^{{p_2}/{q_2}},\ldots, x_n^{{p_n}/{q_n}}$, one recovers a dependent vector for the original numbers $x_1,x_2,\ldots,x_n$. In particular, one recovers a dependent vector for $\{x_i\}_{i\in P}$ (defined in (\ref{reducedequation})) from a dependent vector for $\textbf{R}(x_{i_1}),\textbf{R}(x_{i_2}),\ldots,\textbf{R}(x_{i_s})$, which are all rational numbers. Computing multiplicatively dependent vectors for rational numbers is equivalent to solving linear Diophantine equations. We show this in the following by an example.
\begin{example}
\label{rationalcase}
Set $x_1=\frac{21}{4},x_2=\frac{27}{50},x_3=\frac{245}{32},x_4=\frac{16}{7}$. We solve the equation $x_1^{k_1}x_2^{k_2}x_3^{k_3}x_4^{k_4}=1$ with unknown integers $k_i$. One factors those rationals and collects factors sharing a same base:
\begin{equation}\begin{array}{l}
\quad1=x_1^{k_1}x_2^{k_2}x_3^{k_3}x_4^{k_4}\\
\\
\quad\;\;\;=\Big(\frac{21}{4}\Big)^{k_1}\cdot\Big(\frac{27}{50}\Big)^{k_2}\cdot\Big(\frac{245}{32}\Big)^{k_3}\cdot\Big(\frac{16}{7}\Big)^{k_4}\\
\\
\quad\;\;\;=\Big(\frac{3\cdot7}{2^2}\Big)^{k_1}\cdot\Big(\frac{3^3}{2\cdot5^2}\Big)^{k_2}\cdot\Big(\frac{5\cdot7^2}{2^5}\Big)^{k_3}\cdot\Big(\frac{2^4}{7}\Big)^{k_4}\\
\\
\quad\;\;\;=2^{-2k_1-k_2-5k_3+4k_4}\cdot3^{k_1+3k_2}\cdot5^{-2k_2+k_3}\cdot7^{k_1+2k_3-k_4}.
\end{array}\label{above}\end{equation}
By the fundamental theorem of arithmetic, any vector $(l_1,\ldots,l_4)^T$ in $\mathbb{Z}^4\backslash\{\mathbf{0}\}$ satisfies  $2^{l_1}3^{l_2}5^{l_3}7^{l_4}\neq1$. Hence (\ref{above}) is equivalent to
\[\left\{ {\begin{array}{*{20}{c}}
-2k_1&-&k_2&-&5k_3&+&4k_4&=&0,\\
k_1&+&3k_2&&&&&=&0,\\
&-&2k_2&+&k_3&&&=&0,\\
k_1&&&+&2k_3&-&k_4&=&0,
\end{array}} \right.\]
which has a unique solution $\mathbf{0}$. Hence $x_1,x_2,x_3,x_4$ are multiplicatively independent.
\end{example}

\section{Main algorithms}\label{sec:alg}

\subsection{Preprocess for Algebraic Numbers}
There are three ways to reduce the degrees of the algebraic numbers that we are going to deal with.  First, we use a function  \textbf{RootOfUnityTest} to decide whether a nonzero algebraic number is a root of unity and return its order if it is. Second, \textbf{RootOfRationalTest} in Algorithm \ref{algror} is a function to decide whether a nonzero algebraic number is a root of rational and return its rational order. Finally, the function \textbf{DegreeReduction} in Algorithm \ref{algdegred} is developed in order that for a given $\mathfrak{a}\in\overline{\mathbb{Q}}^*$, one finds an integer $q \geq 1$ \emph{s.t.} $\textbf{deg}(\mathfrak{a}^q)=\min_{k\in \mathbb{Z},k\geq1}\textbf{deg}(\mathfrak{a}^k)$ and the minimal polynomial of $\mathfrak{a}^q$.
\begin{definition}
\label{reddegree}
The \emph{reduced degree} of $\mathfrak{a}\in\overline{\mathbb{Q}}^*$  is defined by
$
\textbf{\emph{rdeg}}(\mathfrak{a})$\\$=\min_{k\in \mathbb{Z},k\geq1}\textbf{\emph{deg}}(\mathfrak{a}^k).
$
\emph{The set of reducing exponents} of $\mathfrak{a}$ is given by
 $
\textbf{\emph{Rexp}}(\mathfrak{a})=\{q\in\mathbb{Z}^*|\textbf{\emph{deg}}(\mathfrak{a}^q)=\textbf{\emph{rdeg}}(\mathfrak{a})\}
$. We say $\mathfrak{a}$ is \emph{degree reducible} if  $\textbf{\emph{rdeg}}(\mathfrak{a})<\textbf{\emph{deg}}(\mathfrak{a})$, otherwise \emph{degree irreducible}.
\end{definition}
\begin{example}
Set $\mathfrak{a}=(\sqrt{5}-2)e^{\frac{\pi \sqrt{-1}}{3}}$ with minimal polynomial $f(t)=t^4 + 4 t^3 + 17 t^2 - 4 t + 1$. Then \textbf{\emph{deg}}$(\mathfrak{a}^3)=2<4=\textbf{\emph {deg}}(\mathfrak{a})$. Hence $\mathfrak{a}$ is degree reducible.
\end{example}
We permute algebraic numbers $x_1,x_2,\ldots,x_n$ as follows
\begin{equation}
\mathfrak{x}^T=(\mathfrak{x}_1,\ldots,\mathfrak{x}_n)=(\alpha_1,\ldots,\alpha_{r},\beta_1,\ldots,\beta_s,\gamma_1,\ldots,\gamma_m)
\label{rearrange}
\end{equation}
\emph{s.t.} $\{\alpha\}_{i=1}^r$ are roots of unity, $\{\beta\}_{i=1}^s$ are roots of rational but none of which is a root of unity and none of $\{\gamma\}_{i=1}^m$ is a root of rational. Suppose we find $p_i\in\textbf{Rexp}(\gamma_i)$, $p_i\geq1$ for $i\in[m]$. Then we deal with (reduced) algebraic numbers:
\begin{equation}
(y_1,\ldots,y_n)=(1_1,\ldots,1_r,\textbf{R}(\beta_1),\ldots,\textbf{R}(\beta_s),\gamma_1^{p_1}\cdots,\gamma_m^{p_m}).
\label{reduced}
\end{equation}
Set $\tilde{\gamma}_i=\gamma_i^{p_i}$, then none of $\{\tilde{\gamma}_i\}_{i=1}^m$ is a root of rational. And assume that $\{\tilde{\gamma}_i\}_{i=1}^t$, for some $t\leq m$, satisfy the non-degenerate condition in Definition \ref{nondegen}. Here we ensure the non-degenerate condition by checking the equality
$
\textbf{deg}(\tilde{\gamma_1}+\tilde{\gamma_2}+\cdots+\tilde{\gamma_t})=\prod_{i=1}^{t}\textbf{deg}(\tilde{\gamma_i}).
$ By Corollary \ref{suffcor}, $\{\tilde{\gamma}\}_{i=1}^t$ are multiplicatively independent. More generally, we have the following observation:
\begin{proposition}
\label{ind}
Suppose  $\{\delta_i\}_{i=1}^l\subset\{\textbf{\emph{R}}(\beta_i)\}_{i=1}^s$ is a multiplicatively independent sequence. Algebraic numbers $\tilde{\gamma}_1,\ldots,\tilde{\gamma}_t$  satisfy the non-degenerate condition, none of which is a root of rational. Then the sequence $\delta_1,\ldots,\delta_l,\tilde{\gamma}_1,\ldots,\tilde{\gamma}_t$ is multiplicatively independent.
\end{proposition}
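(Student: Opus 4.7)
The plan is to reduce the proposition to a direct application of Theorem \ref{suff}. Note first that each $\delta_i$, being in $\{\textbf{R}(\beta_i)\}_{i=1}^s$, lies in $\mathbb{Q}$ by Definition \ref{rorn}; in particular $\textbf{deg}(\delta_i)=1$. Corollary \ref{suffcor} cannot be applied verbatim, because the $\delta_i$'s \emph{are} roots of rational, so we need to work with Theorem \ref{suff} itself.

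The first step is to check that the combined sequence $\delta_1,\ldots,\delta_l,\tilde{\gamma}_1,\ldots,\tilde{\gamma}_t$ still satisfies the non-degenerate condition of Definition \ref{nondegen}. Since the $\delta_i$'s are rational, one has $\mathbb{Q}[\delta_1,\ldots,\delta_l,\tilde{\gamma}_1,\ldots,\tilde{\gamma}_t]=\mathbb{Q}[\tilde{\gamma}_1,\ldots,\tilde{\gamma}_t]$, whose dimension over $\mathbb{Q}$ equals $\prod_{i=1}^t\textbf{deg}(\tilde{\gamma}_i)$ by hypothesis. Since $\textbf{deg}(\delta_i)=1$ for every $i\in[l]$, this dimension also equals $\prod_{i=1}^l\textbf{deg}(\delta_i)\cdot\prod_{j=1}^t\textbf{deg}(\tilde{\gamma}_j)$, which is exactly what Definition \ref{nondegen} requires.

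Next, suppose for contradiction that there exist integers $k_1,\ldots,k_l,m_1,\ldots,m_t$, not all zero, with
\[
\delta_1^{k_1}\cdots\delta_l^{k_l}\,\tilde{\gamma}_1^{m_1}\cdots\tilde{\gamma}_t^{m_t}=1.
\]
Applying Theorem \ref{suff} to the non-degenerate sequence above, each factor lies in $\mathbb{Q}$; in particular $\tilde{\gamma}_j^{m_j}\in\mathbb{Q}$ for every $j\in[t]$. Because no $\tilde{\gamma}_j$ is a root of rational, Definition \ref{rorn} forces $m_j=0$ for every $j$. The relation then collapses to $\delta_1^{k_1}\cdots\delta_l^{k_l}=1$, whence the multiplicative independence of $\{\delta_i\}_{i=1}^l$ forces all $k_i=0$ as well, a contradiction.

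The only mildly subtle point, and hence the part I would double-check carefully, is the verification of the non-degenerate condition for the concatenated sequence: one must be sure that stacking rationals of degree $1$ in front does not inflate the product $\prod\textbf{deg}$ beyond the extension degree, which is immediate once one observes that $\mathbb{Q}$ is absorbed by the extension. Beyond that, the proof is essentially bookkeeping around Theorem \ref{suff}.
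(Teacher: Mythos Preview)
Your proof is correct and follows essentially the same approach as the paper's own proof: both verify that adjoining the rational $\delta_i$'s preserves the non-degenerate condition, then apply Theorem \ref{suff} to force $\tilde{\gamma}_j^{m_j}\in\mathbb{Q}$, deduce $m_j=0$ since no $\tilde{\gamma}_j$ is a root of rational, and finish using the multiplicative independence of the $\delta_i$'s. Your write-up is slightly more explicit about why the non-degenerate condition survives (spelling out that $\textbf{deg}(\delta_i)=1$ and $\mathbb{Q}[\delta_1,\ldots,\delta_l,\tilde{\gamma}_1,\ldots,\tilde{\gamma}_t]=\mathbb{Q}[\tilde{\gamma}_1,\ldots,\tilde{\gamma}_t]$), but the argument is the same.
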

\begin{proof}
Since $\delta_i$ are rational, $\delta_1,\ldots,\delta_l,\tilde{\gamma}_1,\ldots,\tilde{\gamma}_t$ satisfy the non-degenerate condition. Set
$
\delta_1^{k_1}\cdots \delta_l^{k_l}\tilde{\gamma}_1^{m_1}\cdots\tilde{\gamma}_t^{m_t}=1
$
with integer exponents. Then $\tilde{\gamma}_i^{m_i}\in\mathbb{Q}$ by Theorem \ref{suff}. Since $\tilde{\gamma}_i$ is not a root of rational, $m_i=0$ for $i\in[t]$. Then
$
\delta_1^{k_1}\cdots \delta_l^{k_l}=1
$
with $\delta_1,\ldots,\delta_l$ multiplicatively independent, so $k_i=0$ for $i\in[l]$.
\end{proof}
Algorithm \ref{getbasis} is the main algorithm of this paper, its key parts are Algorithm \ref{alggetpre-basis} (\textbf{GetPreBasis}) and Algorithm \ref{pre-basis2basis} (\textbf{PreBasis2Basis}). The function \textbf{Isomorphism} in Step 7 is given by Algorithm \ref{iso}.

\begin{algorithm}
\SetAlgoLined
\KwIn{Nonzero algebraic numbers $x=(x_1,x_2,\ldots,x_n)^T$.}
\KwOut{$\{Basis,I\}$;
$\;Basis$ is a basis of  $\mathcal{R}_x$ (Definition \ref{rx});\\
$\;I\subset [n]$ indexes a maximal independent sequence.
}
Preprocess $(x_1,x_2,\ldots,x_n)$ to obtain: $\mathfrak{x}$ in (\ref{rearrange})$, (y_i)_{i=1}^n$ in (\ref{reduced}), \begin{small}\big\{\textbf{Order}$(\alpha_i)\big\}_{i=1}^r$, $\big\{\textbf{Rorder}(\beta_i)\big\}_{i=1}^s$ \end{small}and \begin{small}$\big\{p_i\in\textbf{Rexp}(\gamma_i)\big\}_{i=1}^m$\end{small};\\
\label{getbasis}
$\{PreBasis,I\}=\textbf{ GetPreBasis                                }\;\;\;\;\;\;\;\;\;\;\;\;\;\;$
$\big((y_i)_{i=1}^n,\begin{small}\{\textbf{Order}(\alpha_i)\}_{i=1}^r,\{\textbf{Rorder}(\beta_i)\}_{i=1}^s,\end{small}\{p_i\}\begin{small}_{i=1}^m\end{small}\big);$\\
$J=[n]\backslash I$; \\
\textbf{if} $(J==\emptyset)$ \textbf{\{Return} $\{\emptyset,[n]\}$;\textbf{\} endif}\\
Set $J=\{j_1,j_2,\cdots\}$, $j_1<j_2<\cdots$, $PreBasis=\{w_{j_1},w_{j_2},\cdots\};$\\
$g=\text{GCD}(w_{j_1}(1),w_{j_1}(2),\ldots,w_{j_1}(j_1))$;\\
$a=\textbf{Isomorphism}(\mathfrak{x},w_{j_1}/g,g);$
$u_{j_1}=w_{j_1}/\text{GCD}(a,g);\;$(\ref{ini})\\
$Basis=\{u_{j_1}\}$; \\
\For{$(k=2,3,\ldots,|J|)$}
{
$u_{j_k}=\textbf{PreBasis2Basis}(w_{j_k},\mathfrak{x},u_{j_1},u_{j_2},\ldots,u_{j_{k-1}});$\\
$Basis=Basis\cup\{u_{j_k}\}$;\\
}
Since algebraic numbers are permuted in (\ref{rearrange}), re-express each $u_j$ and $I$ in the original indices;\\
\textbf{Return }$\{Basis, I\}$
\caption{\textbf{GetBasis}}
\end{algorithm}

\subsection{Constructing a Set of Pre-Basis Vectors}
In the following, all of $B_j,S_j,I_j,V_j,U_j,J,\tilde{V}_j,\textbf{min}_j,u_j,w_j$ are defined for $\mathfrak{x}$ in (\ref{rearrange}) instead of  $x=(x_1,\ldots,x_n)^T$. We introduce Algorithm \ref{alggetpre-basis} to compute $J$ (Definition \ref{bv} (ii)) and extract pre-basis vectors $w_j\in U_j$ (Definition \ref{prebv}) for each $j\in J$, as preparation for computing a basis $\{u_j\}_{j\in J}$. Moreover, the set $I$ in Algorithm \ref{alggetpre-basis} is successively equal to $I_0,I_1,\ldots,I_{n}$ defined in (\ref{Ij}) as the algorithm runs. Algorithm \ref{alggetpre-basis} deals with algebraic numbers in (\ref{reduced}) while returning dependent vectors $w_j$ for $\mathfrak{x}$ in (\ref{rearrange}). This is by recovering technique mentioned in the proof of Proposition \ref{scale}. Algorithm \ref{alggetpre-basis} works in the spirit of formula (\ref{Ij}). Moreover, a dependent vector $w_j$ is obtained if the multiplicative dependence condition holds in (\ref{Ij}). Steps 2-5 deal with those roots of unity, while Steps 6-15 process roots of rational. Step 16 adds those numbers of non-degenerate property to the set $I$ by Proposition \ref{ind}. Finally, Steps 17-26 cope with the rest numbers that are not root of unity and failed in the non-degenerate condition test. For designing the function \textbf{DecideDependence}, we need results in \cite{van1977multiplicative,loxton1983multiplicative,masser1988linear}.  Theorem 1 of \cite{van1977multiplicative},  Theorem 3 of \cite{loxton1983multiplicative} and Theorem $G_m$ of  \cite{masser1988linear}  are of the same form as follows:
\begin{theorem}
\label{bnd}
Set $x\in(\overline{\mathbb{Q}}^*)^n$ to be multiplicatively dependent, then there is an efficiently computable number $\emph{\textbf{bnd}}$ s.t. $\exists v\in \mathcal{R}_x\backslash\{\mathbf{0}\}$, $\|v\|_\infty\leq \emph{\textbf{bnd}}$.
\end{theorem}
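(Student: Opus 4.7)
The plan is to treat this theorem as a direct packaging of the effective bounds for multiplicative dependencies established in \cite{van1977multiplicative,loxton1983multiplicative,masser1988linear}, and my proof would amount to verifying that the hypotheses of one of those results apply to our setting and reading off the conclusion in the notation of the present paper. The core idea is: since $\mathcal{R}_x$ is a sublattice of $\mathbb{Z}^n$, and since $x$ being multiplicatively dependent means $\mathcal{R}_x \neq \{\mathbf{0}\}$, its first successive minimum $\lambda_1(\mathcal{R}_x)$ (with respect to $\|\cdot\|_\infty$) is attained by some nonzero $v$, and the whole point is to bound $\lambda_1(\mathcal{R}_x)$ from above in terms of quantities that can be read off of the input data.

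The steps I would carry out are the following. First, I would collect effective data about the input: a common upper bound $D$ for the degrees $[\mathbb{Q}(x_i):\mathbb{Q}]$ and an upper bound $H$ for the absolute logarithmic Weil heights $h(x_i)$, both of which are computable from the minimal polynomials used to represent the $x_i$. Second, I would translate the multiplicative relation $x_1^{k_1}\cdots x_n^{k_n}=1$ into an additive relation among complex logarithms: choosing branches, one gets $k_1\log x_1+\cdots+k_n\log x_n \in 2\pi i\,\mathbb{Z}$, so a nonzero integer vector $v\in\mathcal{R}_x$ corresponds to a nontrivial integer linear relation among $\log x_1,\ldots,\log x_n,2\pi i$. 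Third, I would apply a Baker-type lower bound on nonvanishing linear forms in logarithms to conclude that if such a vector exists at all, then one must exist whose coordinates are bounded by an explicit function $F(n,D,H)$; this is precisely what each of Theorem~1 of \cite{van1977multiplicative}, Theorem~3 of \cite{loxton1983multiplicative}, and Theorem~$G_m$ of \cite{masser1988linear} provides, differing only in the sharpness of $F$. Finally, I would set $\textbf{bnd}:=F(n,D,H)$ and verify that this quantity is effectively computable from the minimal polynomials of the $x_i$.

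The main obstacle, and the part one does not want to redo by hand, is the production of an explicit $F(n,D,H)$: the transcendence-theoretic ingredients involved (Gelfond--Baker estimates, Masser's refinements via Diophantine approximation and the product formula) are technically delicate and the resulting constants are notoriously large. A subtlety one must also handle is the possibility that some $\log x_i$ vanish or that several $x_i$ are roots of unity; in those degenerate cases the ``relation'' on the logarithm side collapses and one must fall back on the easy bound coming from the orders of those roots of unity, which is dominated by $F(n,D,H)$ anyway. Since the heavy lifting is done in the cited papers, the remaining work in our setting is purely bookkeeping: matching the height/degree conventions used there with ours, and confirming that the bound they produce is indeed algorithmically computable — which it is, as each of those proofs is constructive. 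This justifies using $\textbf{bnd}$ as a black box in the algorithms of Section~\ref{sec:alg}, in particular inside \textbf{DecideDependence}.
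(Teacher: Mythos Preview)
Your proposal is correct and matches the paper's treatment: the paper does not give its own proof of Theorem~\ref{bnd} at all, but simply states it as a uniform restatement of Theorem~1 of \cite{van1977multiplicative}, Theorem~3 of \cite{loxton1983multiplicative}, and Theorem~$G_m$ of \cite{masser1988linear}, remarking only that ``the number \textbf{bnd} is given by corresponding theorems'' there. Your write-up is in fact more detailed than what the paper provides, though one small caution: the route via Baker-type lower bounds for linear forms in logarithms is closer to \cite{van1977multiplicative,loxton1983multiplicative}, whereas Masser's argument in \cite{masser1988linear} proceeds through heights and geometry of numbers rather than Baker's theorem, so if you want to name a specific mechanism it is safest to simply cite the references as the paper does.
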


The number \textbf{bnd} is given by corresponding theorems in \cite{van1977multiplicative,loxton1983multiplicative,masser1988linear}. Other similar theorems can be found in \cite{loher2004uniformly, tori, mat}. By definition $S_{j-1}$ is multiplicatively independent. If the sequence $S_{j-1},\mathfrak{x}_j$ is multiplicatively dependent, by Theorem \ref{bnd} it follows that $\exists v\in\mathbb{Z}^{I_{j-1}\cup\{j\}}\backslash\{\mathbf{0}\},$ \emph{s.t.} $\mathfrak{x}^v=1,\|v\|_\infty<\textbf{bnd} $ and $v(j)>0$. Function \textbf{DecideDependence} accepts a sequence of nonzero algebraic numbers $b_1,b_2,\ldots,b_\iota,b_{\iota+1}$, with$\{b_j\}_{j=1}^\iota$ multiplicatively independent, as its input. By exhaustive search in the box defined by \textbf{bnd}, \textbf{DecideDependence} returns $\{\textbf{True}, v\}$ if it finds a dependent vector $v$, $\{\textbf{False},\mathbf{0}\}$ otherwise. \textbf{DecideDependence} can be replaced by the function \textbf{FindRelations}  in \cite{kauers2005algorithms} $\S7.3$ which accepts an $x\in(\overline{\mathbb{Q}}^*)^n$ as its input and returns a basis of $\mathcal{R}_x$. In this case, it returns one basis vector or $\emptyset$.
\begin{algorithm}
\SetAlgoLined
\KwIn{Pre-processed algebraic numbers in (\ref{reduced});\\
\begin{small}
$\big\{\textbf{Order}(\alpha_i)\big\}_{i=1}^r, \big\{\textbf{Rorder}(\beta_i)\big\}_{i=1}^s, \big\{$\end{small}$p_i\;$\begin{small}$\in\textbf{Rexp}(\gamma_i)\big\}_{i=1}^m$\end{small} in (\ref{rearrange}).
}

\KwOut{$\{PreBasis,I\}$;\\
Pre-basis $\{w_j\}_{j\in J}$ (Definition \ref{prebv}) and $J$ (Definition \ref{bv} (ii));\\
$I\subset [n]$ indexes a maximal independent sequence of $\mathfrak{x}$.
}
\caption{\textbf{GetPreBasis}}
$J=\emptyset$;
$I=\emptyset$;
$j=1$;
$\text{$PreBasis$}=\emptyset$;\\
\For{$(j=1,2,\ldots,r)$}
{\label{alggetpre-basis}
$J=J\cup\{j\}$;
$w_j=(0_1,0_2,\ldots,0_{j-1},\begin{small}\textbf{Order}\end{small}(\alpha_j))^T$;\\
$\text{$PreBasis$}=\text{$PreBasis$}\cup\{w_j\}$;\\
}
\textbf{if }{$(s>0)$}\textbf{ \{}{$I=I\cup\{r+1\};$\textbf{\} endif}\\}
\For{$(j=r+2,r+3,\ldots,r+s)$}
{
Solve $\mathfrak{D}:\;\;\prod_{i\in I\cup \{j\}}y_i^{k_i}==1\wedge k_{j}>0,$ with unknown vector $(k_i)_{i\in I\cup \{j\}}$ by the method in Example \ref{rationalcase};\\
\textbf{if }$(\mathfrak{D}$ has no solutions $)$ \textbf{\{}$I=I\cup\{j\}$;\textbf{ Continue;\} endif}\\
$J=J\cup\{j\}$;\\
\For{$(i\in I\cup\{j\})$}
{
$w_j(i)=\begin{small}\textbf{Rorder}\end{small}(\beta_{i-r})\cdot w_j(i)$;\\
}
$\text{$PreBasis$}=\text{$PreBasis$}\cup\{w_j\}$;\\
}
$I=I\cup\{r+s+1,r+s+2,\ldots,r+s+t\}$;\\
\For{$(j=r+s+t+1,\ldots,n-1,n)$}
{
$\{Depnd,w_j\}=\textbf{DecideDependence}\big(\{y_i\}_{i\in I},y_j\big)$;\\
    \textbf{if }$(I=\emptyset\lor Depnd=\textbf{False})$\textbf{ \{}$I=I\cup\{j\}$;\textbf{ Continue;\} \textbf{endif}}\\
              $J=J\cup\{j\}$;\\
\For {$($each $i\in I\cup\{j\})$}
{
\textbf{if} ($i\leq r+s$) $\textbf{\{}w_j(i)=\begin{small}\textbf{Roder}(\beta_{i-r})\end{small}\cdot w_j(i);\textbf{\}}$\\
\textbf{else \{}$w_j(i)=p_{i-r-s}\cdot w_j(i);\textbf{\} endif}$
}
$\text{$PreBasis$}=\text{$PreBasis$}\cup\{w_j\}$;\\
}
\textbf{Return } $\{PreBasis,I\}$
\end{algorithm}
\subsection{Recovering Basis from Pre-Basis}
We recover a basis  $\{u_j\}_{j\in J}$ (Definition \ref{bv}) from the pre-basis $\{w_j\}_{j\in J}$ (Definition \ref{prebv}). Algorithm \ref{alggetpre-basis} obtains $J=\{j\in[n]\;|\;V_j\neq\emptyset\}=\{j\in[n]\;|\;U_j\neq\emptyset\}$ on which $u_j$ and $w_j$ are defined. We compute $u_j$ inductively. That is, if $J=\{j_1,j_2,\ldots,j_r\}$ then we compute $u_{j_2}$ based on $u_{j_1}$, compute $u_{j_3}$ based on $u_{j_1}$ and $u_{j_2}$, and so forth.

\subsubsection{Recovering  a Basis Vector}

To begin with, we need the first basis vector $u_{j_1}$. By Theorem \ref{basisthm}, $w_{j_1}=\tau u_{j_1}$ for an integer $\tau>0$. Set $g= \text{GCD}(w_{j_1}(1),w_{j_1}(2),$\\$\ldots,w_{j_1}(j_1))$.
Consider $\tilde{w}= w_{j_1}/g$, then $u_{j_1}=k \tilde{w}$ for an integer $k>0$. By the definition of $u_j$, $k= \min K$, where
 $
K=\{q\in\mathbb{Z}\;|\;q>0, \mathfrak{x}^{q\cdot\tilde{w}}=1\}.
$ Noting that $(\mathfrak{x}^{\tilde{w}})^g=\mathfrak{x}^{w_{j_1}}=1$, we compute by Algorithm \ref{iso} an integer $0\leq a<g$ \emph{s.t.} $\mathfrak{x}^{\tilde{w}}=e^{2a\pi\sqrt{-1}/g}$. Then $\mathfrak{x}^{q\cdot\tilde{w}}=1\Leftrightarrow
 g|q\cdot a.
$ Thus
\[\begin{array}{rcl}
K&=&\{q\in\mathbb{Z}\;|\;q>0,qa\text{ is a multiple of }g\}\\
&=&\{q\in\mathbb{Z}\;|\;q>0,qa\text{ is a common multiple of }a\text{ and }g\}.
\end{array}
\] Hence if $a>0$, $\min K={\text{LCM}(a,g)}/{a}={g}/{\text{GCD}(a,g)}$. If $a=0$ then $\mathfrak{x}^{\tilde{w}}=1$ and $\min K=1=g/\text{GCD}(a,g)$. In a nutshell
\begin{equation}
u_{j_1}={g}/{\text{GCD}(a,g)}\cdot \tilde{w}={w_{j_1}}/{\text{GCD}(a,g)}.
\label{ini}
\end{equation}

Suppose $u_{j_1},\ldots,u_{j_k}$ are obtained, we compute a vector in $\tilde{V}_{j_{k+1}}$ (Definition \ref{bv} (iv)) to be $u_{j_{k+1}}$. Set $\tau= {w_{j_{k+1}}(j_{k+1})}/{\textbf{min}_{j_{k+1}}}$$\in \mathbb{Z}$ and $\Lambda=\{\lambda\in[w_{j_{k+1}}(j_{k+1})]|\lambda$ divides $w_{j_{k+1}}(j_{k+1})\}.$ For each $\lambda\in\Lambda$, define an equation
\begin{equation}\textbf{E}_\lambda:\left\{ {\begin{array}{*{20}{c}}
v({j_{k + 1}}) &=& \frac{{{w_{{j_{k + 1}}}}({j_{k + 1}})}}{\lambda }&\;\;\text{(a)}\\
{{\bar w}_{{j_{k + 1}}}} - \lambda \bar v &=& \sum\limits_{\iota  = 1}^k {{q_\iota }} {u_{{j_\iota }}}&\;\;\text{(b)}\\
{\mathfrak{x}^v} &=& 1&\;\;\text{(c)}
\end{array}} \right.\label{E}\end{equation}
with $v\in\mathbb{Z}^{j_{k+1}}$ unknown vector and $q_\iota$ unknown integers. Here $\overline{v}= v|(j_{k+1}-1)$ and $\overline{w}_{j_{k+1}}= w_{j_{k+1}}|(j_{k+1}-1)$ are defined at the beginning of $\S$3.1.
\begin{proposition}
\label{equi}
For $\lambda\in\Lambda$, the following conditions are equivalent : \emph{(i)} there are integers $\{q_\iota\}_{\iota=1}^k$, s.t. $\{v,q_\iota\}$ is a solution to \textbf{E}$_\lambda$, \emph{(ii)} $v\in V_{j_{k+1}}$and $v(j_{k+1})=w_{j_{k+1}}$$(j_{k+1})/\lambda$.
\end{proposition}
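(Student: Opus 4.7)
The plan is to prove the two implications separately, using Theorem \ref{basisthm} (or rather, the induction argument inside its proof) as the main engine for the harder direction.

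For (i) $\Rightarrow$ (ii): suppose $\{v, q_1,\ldots,q_k\}$ solves $\mathbf{E}_\lambda$. Since $w_{j_{k+1}} \in U_{j_{k+1}} \subset V_{j_{k+1}}$ we have $w_{j_{k+1}}(j_{k+1}) > 0$, and $\lambda \in \Lambda$ is a positive divisor of $w_{j_{k+1}}(j_{k+1})$, so (a) gives $v(j_{k+1}) = w_{j_{k+1}}(j_{k+1})/\lambda$, a positive integer. Combined with (c), this puts $v$ in $V_{j_{k+1}}$ with the required last coordinate. Equation (b) is not needed for this direction.

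For (ii) $\Rightarrow$ (i): I would look at the auxiliary vector $z := w_{j_{k+1}} - \lambda v \in \mathbb{Z}^{j_{k+1}}$. Its $j_{k+1}$-th coordinate vanishes by the assumption on $v(j_{k+1})$, so $\textbf{Tail}(z) < j_{k+1}$; and $\mathfrak{x}^z = \mathfrak{x}^{w_{j_{k+1}}} \cdot (\mathfrak{x}^v)^{-\lambda} = 1$ because $w_{j_{k+1}}$ and $v$ both lie in $\mathcal{R}_\mathfrak{x}$. Hence $z \in \mathcal{R}_\mathfrak{x}$ with tail strictly less than $j_{k+1}$. Invoking the inductive argument of Theorem \ref{basisthm} (which shows any lattice vector with tail $< j_{k+1}$ is an integer combination of those $u_{j_\iota}$ with $j_\iota < j_{k+1}$, i.e.\ $u_{j_1},\ldots,u_{j_k}$), I obtain integers $q_1,\ldots,q_k$ with $z = \sum_{\iota=1}^k q_\iota u_{j_\iota}$. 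Finally, the $j_{k+1}$-th coordinate of each $u_{j_\iota}$ with $\iota \leq k$ is zero by the coordinate-padding convention of $\S3.1$, so restricting the identity $z = \sum q_\iota u_{j_\iota}$ to the first $j_{k+1}-1$ coordinates yields exactly (b). Equations (a) and (c) are immediate from (ii).

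I expect both directions to be mostly bookkeeping; the only genuinely content-bearing step is the $\Leftarrow$ appeal to the induction in Theorem \ref{basisthm} to certify that $z$ belongs to $\textbf{span}_{\mathbb{Z}}\{u_{j_1},\ldots,u_{j_k}\}$ rather than merely to $\textbf{span}_{\mathbb{Z}} B_n$. This is safe because the induction proceeds by tail size, and all basis vectors of $B_n$ with index above $j_{k+1}-1$ contribute zero to any combination whose tail is $< j_{k+1}$ (a $\mathbb{Z}$-independence argument on the leading coordinate). If one wants full rigor it may be worth extracting a short sublemma stating this ``truncation'' property of $B_n$ before invoking it, but no new ideas are needed.
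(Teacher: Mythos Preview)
Your proposal is correct and follows essentially the same route as the paper. The only cosmetic difference is that where you appeal to ``the inductive argument inside Theorem~\ref{basisthm}'' to place $z$ in $\textbf{span}_{\mathbb{Z}}\{u_{j_1},\ldots,u_{j_k}\}$, the paper simply applies Theorem~\ref{basisthm} itself to the truncated vector $\mathfrak{x}|(j_{k+1}-1)$, obtaining $\mathcal{R}_{\mathfrak{x}|(j_{k+1}-1)}=\textbf{span}\{u_{j_1},\ldots,u_{j_k}\}$ directly; this obviates your proposed ``truncation'' sublemma.
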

\begin{proof}
(i) $\Rightarrow$ (ii) is trivial. Suppose (ii) holds, then so do (\ref{E}\textcolor{red}{a}) and (\ref{E}\textcolor{red}{c}). Since $(w_{j_{k+1}}-\lambda v)(j_{k+1})=0, (\overline{w}_{j_{k+1}}-\lambda \overline{v})\in \mathcal{R}_{\mathfrak{x}|(j_{k+1}-1)}=\textbf{span}\{u_{j_1},u_{j_2},\ldots,u_{j_k}\}$  by Theorem \ref{basisthm}. So (\ref{E}\textcolor{red}{b}) holds for some $q_1,q_2,\ldots,q_k\in\mathbb{Z}$, thus (i) follows.
\end{proof}
Consider Proposition \ref{equi} when $\lambda=\tau\in\Lambda$. Since $\tilde{V}_{j_{k+1}}\neq \emptyset$, any $v\in\tilde{V}_{j_{k+1}}$ satisfies condition (ii). Thus $v$ satisfies condition (i) as well. So $\textbf{E}_\tau$ has a solution. Define $\tilde{\Lambda}=\{\lambda\in\Lambda\;|\;\textbf{E}_\lambda \text{ has a solution}\}$ and $\lambda_m=\max \tilde{\Lambda}$, we indicate that $\tau=\lambda_m$.
\begin{proposition}
\label{maxlambda}
Set $\lambda_m=\max \tilde{\Lambda}$, then $\lambda_m=\tau=\frac{w_{j_{k+1}}(j_{k+1})}{\textbf{\emph{min}}_{j_{k+1}}}$. Each solution to \textbf{E}$_{\lambda_m}$ can be projected to a vector in $\tilde{V}_{j_{k+1}}$.
\end{proposition}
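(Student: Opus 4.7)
The plan is to reduce the proposition to the characterization already given by Proposition \ref{equi}, which says that $\textbf{E}_\lambda$ is solvable exactly when there is some $v \in V_{j_{k+1}}$ with $v(j_{k+1}) = w_{j_{k+1}}(j_{k+1})/\lambda$. Once this translation is in place, everything reduces to comparing the prescribed last coordinate with $\textbf{min}_{j_{k+1}}$.

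First, I would establish $\tau \in \tilde{\Lambda}$, which gives $\tau \leq \lambda_m$. Since $j_{k+1} \in J$, the set $\tilde{V}_{j_{k+1}}$ is non-empty; pick any $u \in \tilde{V}_{j_{k+1}}$, so $u \in V_{j_{k+1}}$ and $u(j_{k+1}) = \textbf{min}_{j_{k+1}}$. By Proposition \ref{divide} applied to $w_{j_{k+1}} \in V_{j_{k+1}}$, $\textbf{min}_{j_{k+1}}$ divides $w_{j_{k+1}}(j_{k+1})$, so $\tau = w_{j_{k+1}}(j_{k+1})/\textbf{min}_{j_{k+1}}$ is a positive integer dividing $w_{j_{k+1}}(j_{k+1})$, i.e., $\tau \in \Lambda$. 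Moreover $u(j_{k+1}) = w_{j_{k+1}}(j_{k+1})/\tau$, so condition (ii) of Proposition \ref{equi} is met by $u$ with $\lambda = \tau$. The implication (ii)$\Rightarrow$(i) gives a solution to $\textbf{E}_\tau$, hence $\tau \in \tilde{\Lambda}$.

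Next, I would prove the reverse inequality $\lambda_m \leq \tau$. Take any solution $(v, q_1, \ldots, q_k)$ of $\textbf{E}_{\lambda_m}$; by (i)$\Rightarrow$(ii) of Proposition \ref{equi}, $v \in V_{j_{k+1}}$ with $v(j_{k+1}) = w_{j_{k+1}}(j_{k+1})/\lambda_m$. The definition of $\textbf{min}_{j_{k+1}}$ (Definition \ref{bv}(iii)) forces $v(j_{k+1}) \geq \textbf{min}_{j_{k+1}}$, which rearranges to $\lambda_m \leq w_{j_{k+1}}(j_{k+1})/\textbf{min}_{j_{k+1}} = \tau$. Combining the two inequalities yields $\lambda_m = \tau$. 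The last assertion is then immediate: for any solution $(v, q_\iota)$ of $\textbf{E}_{\lambda_m}$, its projection $v$ lies in $V_{j_{k+1}}$ and satisfies $v(j_{k+1}) = w_{j_{k+1}}(j_{k+1})/\lambda_m = \textbf{min}_{j_{k+1}}$, which is exactly membership in $\tilde{V}_{j_{k+1}}$ by Definition \ref{bv}(iv).

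There is really no serious obstacle here, since Proposition \ref{equi} already does the heavy lifting by converting solvability of $\textbf{E}_\lambda$ into a statement about $V_{j_{k+1}}$. The only bookkeeping point that deserves care is verifying $\tau \in \Lambda$ (so that $\lambda = \tau$ is admissible in the definition of $\tilde{\Lambda}$), which is where Proposition \ref{divide} is needed to guarantee that $\textbf{min}_{j_{k+1}}$ divides $w_{j_{k+1}}(j_{k+1})$.
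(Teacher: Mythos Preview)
Your proof is correct and follows essentially the same approach as the paper: both arguments use Proposition~\ref{equi} to translate solvability of $\textbf{E}_\lambda$ into the existence of a vector in $V_{j_{k+1}}$ with prescribed last coordinate, establish $\tau\in\tilde{\Lambda}$ via a vector in $\tilde{V}_{j_{k+1}}$, and bound $\lambda_m\leq\tau$ via the minimality in Definition~\ref{bv}(iii). Your version is slightly more explicit in invoking Proposition~\ref{divide} to check $\tau\in\Lambda$, which the paper had established in the discussion preceding the proposition.
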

\begin{proof}
Since $\textbf{E}_{\lambda_m}$ has a solution, by Proposition \ref{equi}, $\exists v\in V_{j_{k+1}}$ \emph{s.t.}  $v(j_{k+1})=w_{j_{k+1}}(j_{k+1})/\lambda_m$. Then $\textbf{min}_{j_{k+1}}\leq w_{j_{k+1}}(j_{k+1})/\lambda_m$ by Definition \ref{bv} (iii). This is $\tau\geq\lambda_m$. Since $\textbf{E}_\tau$ has a solution, $\tau\in\tilde{\Lambda}$. Hence $\tau\leq\lambda_m$. Suppose $\{v,q_\iota\}$ is a solution to $\textbf{E}_{\lambda_m}$. By Proposition \ref{equi} its projection $v\in V_{j_{k+1}}$ and $v(j_{k+1})=w_{j_{k+1}}(j_{k+1})/\lambda_m=w_{j_{k+1}}(j_{k+1})/\tau=\textbf{min}_{j_{k+1}}$. Hence $v\in\tilde{V}_{j_{k+1}}$ by definition.
\end{proof}
In the spirit of Proposition \ref{maxlambda}, we arrange the numbers in $\Lambda$ decreasingly as: $\lambda^{(1)}>\lambda^{(2)}>\cdots.$ Then we solve a sequence of equations in the order: $\textbf{E}_{\lambda^{(1)}},\textbf{E}_{\lambda^{(2)}},\cdots.$ We stop as soon as some $\textbf{E}_{\lambda^{(i)}}$ has a solution (otherwise we move to the next one). Since $\textbf{E}_{\lambda^{(i)}}$ is the first set of equations to have a solution, $\lambda^{(i)}=\lambda_m$. Any solution to $\textbf{E}_{\lambda^{(i)}}$ can be projected to a vector in $\tilde{V}_{j_{k+1}}$, which we take as the value of $u_{j_{k+1}}$.
\subsubsection{Solving the Equation $\textbf{E}_\lambda$}

First, solve the linear Diophantine equation (\ref{E}\textcolor{red}{b}). If (\ref{E}\textcolor{red}{b}) has no solutions, then neither has $\textbf{E}_\lambda$. Suppose (\ref{E}\textcolor{red}{b}) has general solution:
\begin{equation}\left( {\begin{array}{*{20}{c}}
L\\
Q
\end{array}} \right) = \left( {\begin{array}{*{20}{c}}
{L_0}\\
{Q_0}
\end{array}} \right) + {z_1}\left( {\begin{array}{*{20}{c}}
{L_1}\\
{Q_1}
\end{array}} \right) +  \cdots  + {z_s}\left( {\begin{array}{*{20}{c}}
{L_s}\\
{Q_s}
\end{array}} \right)\label{generalsolution}\end{equation}
\noindent{}where $L_i\in \mathbb{Z}^{j_{k+1}-1}$, $Q_i\in\mathbb{Z}^{k}$ and $z_i$ are any integers. $L$ gives the value of $\overline{v}$ while $Q$ gives the value of the vector $(q_1,q_2,\ldots,q_k)^T$. Moreover,
\begin{equation}\left\{ {\begin{array}{*{20}{c}}
{{{\overline w}_{{j_{k + 1}}}} - \lambda {L_0} = \sum\limits_{\iota  = 1}^k {{Q_0}(\iota )} {u_{{j_\iota }}}},\\
{ - \lambda {L_i} = \sum\limits_{\iota  = 1}^k {{Q_i}(\iota )} {u_{{j_\iota }}},\;{\rm{for}}\;i \in[s].}
\end{array}} \label{solu}\right.\end{equation}
\noindent{}That is, $(L_0^T,Q_0^T)^T$ is a solution to (\ref{E}\textcolor{red}{b}) and $(L_i^T,M_i^T)^T$ form a basis of the solution lattice to the homogeneous version of (\ref{E}\textcolor{red}{b}).

We concern ourselves with the problem whether there are integers $z_1,z_2,\ldots,z_s$ \emph{s.t.} while taking $v=(L^T,{w_{j_{k+1}}(j_{k+1})}/{\lambda})^T$ as in (\ref{generalsolution}), (\ref{E}\textcolor{red}{c}) holds. If there are, then $\textbf{E}_\lambda$  has a solution, otherwise it has not. In fact
\[\begin{array}{rcl}
\mathfrak{x}^v&=&{\mathfrak{x}^{{L_0} + {z_1}{L_1} +  \cdots  + {z_s}{L_s}}} \cdot \mathfrak{x}_{j_{k + 1}}^{{{{w_{{j_{k + 1}}}}({j_{k + 1}})}}/{\lambda }}\\
&=& \mathfrak{x}^{z_1L_1} \cdots \mathfrak{x}^{z_sL_s}\cdot (\mathfrak{x}^{L_0}\mathfrak{x}_{j_{k + 1}}^{ w_{j_{k + 1}}(j_{k + 1})/\lambda}).
\end{array}
\]
 Since $\mathfrak{x}^{u_{j_\iota}}=1$, it follows from the 1$^\text{st}$ equation of (\ref{solu}) that $\mathfrak{x}^{\lambda {L_0}}=\mathfrak{x}^{{{\overline w}_{{j_{k + 1}}}}}$. Moreover, $\mathfrak{x}^{{{\overline w}_{{j_{k + 1}}}}}\cdot \mathfrak{x}_{j_{k+1}}^{w_{j_{k+1}}(j_{k+1})}=1$ by definition, one obtains
\[\begin{array}{rl}
&{\big({{\mathfrak{x}^{{L_0}}}\mathfrak{x}_{j_{k + 1}}^{{w_{{j_{k + 1}}}}({j_{k + 1}})/\lambda }}\big)^\lambda }\\
= &{\mathfrak{x}^{\lambda {L_0}}}\mathfrak{x}_{j_{k + 1}}^{{w_{{j_{k + 1}}}}({j_{k + 1}})}\\
 =& {\mathfrak{x}^{{{\overline w}_{{j_{k + 1}}}}}}\mathfrak{x}_{j_{k + 1}}^{{w_{{j_{k + 1}}}}({j_{k + 1}})}\\
= &1.
\end{array}
\]
For $L_i,i\in[s]$, using the 2$^\text{nd}$ equation of (\ref{solu}), we have ${(\mathfrak{x}^{L_i})}^{\lambda}=\mathfrak{x}^{\lambda L_i}=1$. Now  we observe that \begin{equation}\Gamma_0=\mathfrak{x}^{L_0}\mathfrak{x}^{ w_{j_{k + 1}}(j_{k + 1})/\lambda}_{j_{k + 1}}\;\text{  and  }\;\Gamma_i= \mathfrak{x}^{L_i}, i\in[s]\label{Gammai}\end{equation} are roots of the equation $\Gamma^\lambda-1=0$. By using Algorithm \ref{iso}, for each $\Gamma_i$, $i\in\{0\}$$\cup[s]$, one obtains an integer $0\leq a_i<\lambda$ \emph{s.t.} $\Gamma_i=e^{{2a_i\pi\sqrt{-1}}/{\lambda}}$. The set of all roots of the equation $\Gamma^\lambda-1=0$ is isomorphic to the group $\mathbb{Z}/\langle\lambda\rangle=\{0,1,\ldots,\lambda-1\}$ with addition modulo $\lambda$. The problem is reduced to whether there are integers $z_1,\ldots,z_s$ \emph{s.t.}
$\Gamma_1^{z_1}\Gamma_2^{z_2}\cdots \Gamma_s^{z_s}\Gamma_0=1$,
 which is equivalent to $\exists p\in\mathbb{Z}$,
\begin{equation} a_1z_1+\cdots +a_sz_s+a_0=p\lambda.\label{single} \end{equation}
This equation with $z_1,\ldots,z_s, p$ unknown integers can be efficiently solved. If (\ref{single}) has no solutions, neither has $\textbf{E}_\lambda$. Otherwise we get the values of $z_i$ and obtain $L$ and $Q$ from (\ref{generalsolution}). Then $\{v=(L^T,{w_{j_{k+1}}(j_{k+1})}/{\lambda})^T, (q_1,\ldots,q_k)=Q^T\}$ is a solution to $\textbf{E}_\lambda$. Finally, assign $u_{j_{k+1}}$$=(L^T,{w_{j_{k+1}}(j_{k+1})}/{\lambda})^T$. We summarize all these by Algorithm \ref{pre-basis2basis}.

\begin{algorithm}
\SetAlgoLined
\KwIn{Pre-basis vector $w_{j_{k+1}}$; $\mathfrak{x}$ in (\ref{rearrange});\\ \quad\quad\quad Basis vectors: $u_{j_1},u_{j_2},\ldots,u_{j_k}$.}
\KwOut{Next basis vector $u_{j_{k+1}}$.}
\caption{\textbf{PreBasis2Basis}}
Let $\lambda^{(1)}=w_{j_{k+1}}(j_{k+1})>\lambda^{(2)}>\cdots>\lambda^{(\gamma)}=1$ be all positive numbers that divide $w_{j_{k+1}}(j_{k+1})$;\\
\For{$(\lambda=\lambda^{(1)},\lambda^{(2)},\ldots,1)$}
{
     Solve (\ref{E}\textcolor{red}{b}) for $L_0,L_1,\ldots,L_s$;\\
\label{pre-basis2basis}
    \textbf{if }$($(\ref{E}\textcolor{red}{b}) has no solutions$)$\textbf{ \{Continue};\textbf{\} endif}\\
$a_0=\textbf{Isomorphism}\Big(\mathfrak{x},\big(L_0^T,{w_{j_{k+1}}(j_{k+1})}/{\lambda}\big)^T,\lambda\Big);$\\
\For{$(i=1,\ldots,s)$}
{
$a_i=\textbf{Isomorphism}(\mathfrak{x},L_i,\lambda)$;
}
Solve (\ref{single}) for $z_1,z_2,\ldots,z_s,p$;\\
                  \textbf{if }$((\ref{single})$ has no solutions$)$\textbf{ \{Continue};\textbf{\} endif}\\
                  \textbf{Return }$u_{j_{k+1}}=(L^T,{w_{j_{k+1}}(j_{k+1})}/{\lambda})^T$;\\
}
\end{algorithm}

Denote a rectangle by
$
[a,b;c,d]=\{z\in\mathbb{C}\;|\;a\leq\Re(z)\leq b,c\leq\Im(z)\leq d\},
$
 call $a,b,c,d$ the coordinates of the rectangle $[a,b;c,d]$. By complex root isolation in \cite{isolation}, we isolate each $x_j$ by small rectangles $R_j$ on the complex plane with rational coordinates. Assume in addition that $\forall j,0\not\in R_j$. For each $R_j$, efficiently chose an interval $\theta_j=[\underline\theta_j,\overline\theta_j]$ with $\underline\theta_j\leq\overline\theta_j$ rational numbers, \emph{s.t.} $R_j\subset\{\rho e^{\pi\eta\sqrt{-1}}\in\mathbb{C}|\rho>0,\;\underline\theta_j\leq\eta\leq\overline\theta_j\}$.  That can be done by standard method such as series expansion. Using interval arithmetic, we provide Algorithm \ref{iso} for implementing the group isomorphism $\{\Gamma\;|\;\Gamma^\lambda=1\}\rightarrow\mathbb{Z}/\langle\lambda\rangle$. In Step 2, the condition ``$\exists a,b\in\mathbb{Z}\cap[0,\lambda), a\neq b$ \emph{s.t.} $\{2a/\lambda,2b/\lambda\}\subset\Theta$''  is equivalent to ``the  length of  $\Theta\geq 2/\lambda$''. The number ``50\%'' makes sure that the length of each $\theta_j$ converges to $0$ (hence so does the length of $\Theta$) so that  the algorithm terminates.

\begin{algorithm}
\SetAlgoLined
\KwIn{$x=(x_1,\ldots,x_n)^T$$\in(\overline{\mathbb{Q}}^*)^n$;\\\quad\quad$\;\;\;$ A vector $v\in \mathbb{Z}^n$;  an integer $\lambda>0$ \emph{s.t.} $x^{\lambda v}=1$.}
\KwOut{An integer $0\leq a<\lambda$ \emph{s.t.} $x^v=e^{2a\pi\sqrt{-1}/\lambda}$.}
\caption{\textbf{Isomorphism}:  $\{\Gamma\;|\;\Gamma^\lambda=1\}\rightarrow\mathbb{Z}/\langle\lambda\rangle$}
$\Theta=\theta_1=\cdots=\theta_n=[0,2]$;\\
\While {$(\exists a,b\in\mathbb{Z}\cap[0,\lambda), a\neq b$ \emph{s.t.} $\{2a/\lambda,2b/\lambda\}\subset\Theta)$}
{
Isolate each $x_j$ by smaller rectangle $R_j$ and re-compute an interval $\theta_j$ with length decreases by at least $50\%$;\\
      $\Theta=v(1){\theta_1}+v(2){\theta_2}+\cdots +v(n){\theta_n}$;\\
\label{iso}
}
\textbf{Return} the only integer $a$  so that $2a/\lambda\in\Theta$;
\end{algorithm}
\section{Degree Reduction Functions in Preprocessing}\label{sec:red}
\subsection{Recognizing Roots of Rational}
\begin{proposition}
\label{necess}
For $p(t)=t^d+a_{d-1}t^{d-1}+\cdots+a_0\in\mathbb{R}[t]$ whose roots $z_1,\ldots,z_d$ are all of modulus $1$, $p(t)=\pm t^dp(1/t)$.
\end{proposition}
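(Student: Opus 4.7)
The plan is to prove the proposition by directly computing $t^d p(1/t)$ using the factorization of $p$ over $\mathbb{C}$ and then exploiting two symmetries of the root multiset: closure under complex conjugation (because $p \in \mathbb{R}[t]$) and the fact that for unit-modulus numbers, the reciprocal equals the conjugate.

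First I would write $p(t) = \prod_{i=1}^d (t - z_i)$, which is valid since $p$ is monic. Then
\[
t^d p(1/t) \;=\; t^d \prod_{i=1}^d \bigl(\tfrac{1}{t} - z_i\bigr) \;=\; \prod_{i=1}^d (1 - z_i t) \;=\; \Bigl(\prod_{i=1}^d (-z_i)\Bigr) \prod_{i=1}^d \bigl(t - \tfrac{1}{z_i}\bigr).
\]
The scalar prefactor simplifies via Vieta: $\prod_{i=1}^d z_i = (-1)^d a_0$, so $\prod_{i=1}^d(-z_i) = a_0$.

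Next I would argue that $\prod_{i=1}^d (t - 1/z_i) = p(t)$. Because $p \in \mathbb{R}[t]$, the multiset $\{z_1,\dots,z_d\}$ is closed under complex conjugation. Because $|z_i| = 1$, we have $1/z_i = \overline{z_i}$ for every $i$. Hence the multiset of reciprocals coincides with the multiset of conjugates, which in turn coincides with $\{z_1,\dots,z_d\}$ itself. Combining,
\[
t^d p(1/t) \;=\; a_0 \cdot p(t).
\]

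Finally I would note that $|a_0| = \prod_{i=1}^d |z_i| = 1$ and $a_0 \in \mathbb{R}$, so $a_0 = \pm 1$, giving $p(t) = \pm t^d p(1/t)$ as desired. There is no real obstacle here; the only point requiring a sentence of care is the justification that the root multiset is invariant under $z \mapsto 1/z$, which rests on the two hypotheses (real coefficients and unit modulus) being used together.
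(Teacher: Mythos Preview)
Your proof is correct and follows essentially the same approach as the paper: both arguments observe that the roots of $t^d p(1/t)$ are the reciprocals $1/z_i = \overline{z_i}$, which is a permutation of the original root multiset since $p$ has real coefficients, and then conclude by noting $|a_0| = \prod |z_i| = 1$. Your version is slightly more explicit in deriving the identity $t^d p(1/t) = a_0\, p(t)$ via the factorization, but the content is the same.
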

\begin{proof}
Define $\tilde{p}(t)= t^dp(1/t)$. Roots of $\tilde{p}(t)$ are $(z_1^{-1},\ldots,z_d^{-1})=$\\$(\overline{z}_1,\ldots,\overline{z}_d)$. This is a re-permutation of $(z_1,\ldots,z_d)$. Thus $p(t)=$\\$c\cdot\tilde{p}(t), c\in\mathbb{C}\backslash\{0\}$. In fact  $c=1/a_0$, and $|a_0|=\prod_{i\in[d]}|z_i|=1.$
\end{proof}
\begin{proposition}
\label{samerorder}
For irreducible $p(t)\in\mathbb{Q}[t]$ with roots $z_1,\ldots,z_d$ $\in\mathbb{C}$, exactly one of the following conditions holds: (i) none of $z_1,\ldots,z_d$ is a root of rational; (ii) all of $z_1,\ldots,z_d$ are roots of rational, and $\textbf{\emph{R}}(z_1)=\textbf{\emph{R}}(z_j)$, $\textbf{\emph{Rorder}}(z_1)=\textbf{\emph{Rorder}}(z_j)$, $2\leq j\leq d$.
\end{proposition}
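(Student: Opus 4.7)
The plan is to exploit the defining property that an irreducible $p(t)\in\mathbb{Q}[t]$ is (up to a scalar) the minimal polynomial of each of its roots. The two alternatives (i) and (ii) are mutually exclusive by definition, so it suffices to show: if some $z_i$ is a root of rational, then all $z_j$ are, and moreover $\textbf{Rorder}(z_1)=\textbf{Rorder}(z_j)$ and $\textbf{R}(z_1)=\textbf{R}(z_j)$ for every $j$.

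After relabeling I may assume $z_1$ is a root of rational. Let $k=\textbf{Rorder}(z_1)$ and $r=z_1^k\in\mathbb{Q}$. Then $z_1$ is a root of $g(t)=t^k-r\in\mathbb{Q}[t]$. Since $p(t)$ is irreducible in $\mathbb{Q}[t]$ and $p(z_1)=0$, $p(t)$ (up to a nonzero scalar) is the minimal polynomial of $z_1$, so $p(t)\mid g(t)$. Consequently, for every root $z_j$ of $p$, we have $g(z_j)=0$, i.e., $z_j^k=r\in\mathbb{Q}$. This shows each $z_j$ is a root of rational with $\textbf{Rorder}(z_j)\le k$.

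To upgrade this to equality, I would run the same argument symmetrically: writing $k_j=\textbf{Rorder}(z_j)$ and $r_j=z_j^{k_j}\in\mathbb{Q}$, the polynomial $p(t)$ is likewise the minimal polynomial of $z_j$, so $p(t)\mid t^{k_j}-r_j$, and hence $z_1^{k_j}=r_j\in\mathbb{Q}$. By the definition of $\textbf{Rorder}(z_1)=k$, this forces $k\le k_j$, and combined with the inequality $k_j\le k$ obtained above, $k_j=k$. Finally, $\textbf{R}(z_j)=z_j^{k_j}=z_j^k=r=\textbf{R}(z_1)$, giving (ii).

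There is no real obstacle here beyond correctly invoking the minimal-polynomial property; the case $k=1$ (so $z_1\in\mathbb{Q}$ and hence $p(t)=t-z_1$ forces $d=1$) is covered uniformly by the same divisibility argument. Proposition \ref{necess} is not needed for this proposition, it is phrased in the previous paragraph for later use in detecting roots of unity.
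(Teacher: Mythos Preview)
Your proof is correct and follows essentially the same route as the paper: use that $p(t)$ divides $t^k-r$ to get $z_j^k=r\in\mathbb{Q}$ for all $j$, then swap the roles of $z_1$ and $z_j$ to force equality of the rational orders, and conclude $\textbf{R}(z_j)=r$. The only cosmetic difference is that the paper phrases the two comparisons as divisibilities ($\textbf{Rorder}(z_j)\mid k$ and $k\mid\textbf{Rorder}(z_j)$, using the remark after Definition~\ref{rorn}) rather than inequalities; either version yields $k_j=k$.
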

\begin{proof}
Suppose $z_1$ is a root of rational of rational order $k$ and $\textbf{R}(z_1)$\\$=r$. Then $z_1^k-r=0$ and hence $p(t)|t^k-r$. Thus  $p(z_j)=0$ implies $z_j^k-r=0$, $j=2,\ldots,d$. It follows that $\textbf{Rorder}(z_j)|k$. Exchanging the roles of $z_1$ and $z_j$, one sees that $k|\textbf{Rorder}(z_j)$. Hence $k=\textbf{Rorder}(z_j)$ and $\textbf{R}(z_j)=r$.
\end{proof}
Set $\mathfrak{a}$ to be a root of rational, $p(t)$ its monic minimal polynomial with $p(0)=a_0$. All complex roots of $p(t)$ are of the same modulus, say $\ell$, by Proposition \ref{samerorder}, then $|a_0|=\ell^d$. Define monic polynomial
$
\overline{p}(t)= p(\sqrt[d]{|a_0|}\cdot t)/|a_0|
\in \mathbb{R}[t]$ whose roots are all of modulus $1$. Then if the conclusion of  Proposition \ref{necess} fails to hold for $\overline{p}(t)$, one concludes that $\mathfrak{a}$ is not a root of rational.
\begin{proposition}
\label{rorpro}
(Lemma 3.5 in \cite{lemma}) Let $F$ be a field, $E=F(\alpha)$, $[E:F]=d$ and $\alpha^m\in F$. If $f(t)$ is the monic irreducible polynomial of $\alpha$ over $F$, then $\zeta \alpha^d=(-1)^d f(0)\in F$ for some $\zeta $ s.t. $ \zeta^m=1$.
\end{proposition}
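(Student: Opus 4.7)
The plan is to exploit the fact that any conjugate of $\alpha$ over $F$ differs from $\alpha$ by an $m$-th root of unity, and then read off $f(0)$ from the factorization of $f$ over a splitting field.

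First, I would observe that because $\alpha^m\in F$, the polynomial $t^m-\alpha^m$ lies in $F[t]$ and has $\alpha$ as a root. Since $f(t)$ is the monic minimal polynomial of $\alpha$ over $F$, minimality forces $f(t)\mid t^m-\alpha^m$. Therefore every root of $f$ in an algebraic closure $\overline{F}$ is also a root of $t^m-\alpha^m$. Writing the roots of $f$ as $\alpha_1=\alpha,\alpha_2,\ldots,\alpha_d$, this yields $\alpha_i^m=\alpha^m$ for every $i\in[d]$, so $\zeta_i:=\alpha_i/\alpha$ is well-defined in $\overline{F}^*$ and satisfies $\zeta_i^m=1$.

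Next I would factor $f$ in $\overline{F}[t]$. Because $f$ is monic of degree $d$ with roots $\alpha_1,\ldots,\alpha_d$,
\begin{equation*}
f(t)=\prod_{i=1}^d (t-\alpha_i)=\prod_{i=1}^d(t-\zeta_i\alpha).
\end{equation*}
Evaluating at $t=0$ gives $f(0)=\prod_{i=1}^d(-\zeta_i\alpha)=(-1)^d\alpha^d\cdot\zeta$, where $\zeta:=\prod_{i=1}^d\zeta_i$. Since each $\zeta_i^m=1$, we have $\zeta^m=\prod_i\zeta_i^m=1$, and rearranging yields $\zeta\alpha^d=(-1)^d f(0)$. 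Finally, $f(0)\in F$ because $f\in F[t]$ by assumption, so $(-1)^d f(0)\in F$ as required.

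I do not anticipate a genuine obstacle here. The only point that deserves a little care is the justification that the roots $\alpha_i$ of $f$ in $\overline{F}$ really all satisfy $\alpha_i^m=\alpha^m$; in characteristic zero (the algebraic number setting in which this lemma is applied) this is immediate either via the divisibility $f(t)\mid t^m-\alpha^m$ or by applying an $F$-embedding $E\hookrightarrow\overline{F}$ that sends $\alpha\mapsto\alpha_i$ and noting that such an embedding fixes $\alpha^m\in F$. Either route works, and the rest is just the Vieta-style computation above.
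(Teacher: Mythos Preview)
Your argument is correct: the divisibility $f(t)\mid t^m-\alpha^m$ forces every conjugate of $\alpha$ to be $\alpha$ times an $m$-th root of unity, and reading off the constant term via Vieta gives exactly $\zeta\alpha^d=(-1)^d f(0)$ with $\zeta^m=1$. The paper itself does not supply a proof of this proposition; it merely quotes it as Lemma~3.5 of \cite{lemma} and proceeds to apply it, so there is no in-paper argument to compare against. Your proof is the standard one and would serve perfectly well as a self-contained justification; the only cosmetic omission is the degenerate case $\alpha=0$ (where $d=1$, $f(t)=t$, and the claim is trivial), which you may wish to dispose of in a single clause before dividing by $\alpha$.
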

Set $\alpha=\mathfrak{a}, F=\mathbb{Q}, E=\mathbb{Q}(\mathfrak{a})$ and  $m=\textbf{Rorder}(\mathfrak{a})$, then $\zeta =(-1)^da_0/\mathfrak{a}^d$. By Proposition \ref{rorpro}, $\zeta^{\textbf{Rorder}(\mathfrak{a})}=1.$ Thus $\exists k_1\in\mathbb{Z}$ \text{ \emph{s.t.} (i)} $k_1\cdot\textbf{Order}(\zeta)=\textbf{Rorder}(\mathfrak{a}).$ Note that $\zeta^{\textbf{Order}(\zeta)}=1$ means $\mathfrak{a}^{d\cdot\textbf{Order}(\zeta)}=((-1)^da_0)^{\textbf{Order}(\zeta)}\in\mathbb{Q}$. This implies $\exists k_2\in \mathbb{Z}$ \emph{s.t.} (ii)
 $
k_2\cdot\textbf{Rorder}(\mathfrak{a})=d\cdot\textbf{Order}(\zeta).
$
Combining (i) and (ii) we have $k_1\cdot k_2=d.$ These lead to Algorithm \ref{algror} deciding whether an algebraic number, given its minimal polynomial, is a root of rational.
\begin{algorithm}
\SetAlgoLined
\KwIn{Irreducible monic $p(t)\in\mathbb{Q}[t]\text{ \emph{s.t.} }p(0)=a_0,p(\mathfrak{a})=0$.}
\KwOut{$\{\textbf{Rorder}(\mathfrak{a})$, \textbf{R}$(\mathfrak{a})\}$ (Definition \ref{rorn}).}
Compute $\overline{p}(t)=p(\sqrt[d]{|a_0|}\cdot t)/|a_0|$;\\
\textbf{if} $(\overline{p}(t)\neq \textbf{sgn}(a_0)\cdot t^d\overline{p}(1/t))$\textbf{ \{Return} $\{0,1\}$;\textbf{\} endif}\\
$ f(t)=\textbf{MinimalPolynomial}(\mathfrak{a}^d/((-1)^da_0))$;\\
$\{RootOfUnity, order\}=\textbf{RootOfUnityTest}(f(t))$;\\
\textbf{if} (RootOfUnity$==\textbf{False}$)\textbf{ \{Return} $\{0,1\}$;\textbf{\} endif}\\
\label{algror}
Arrange positive divisors of $d$ increasingly: $\lambda_1<\lambda_2<\cdots$;\\
\For{$(\lambda=\lambda_1,\lambda_2,\cdots)$}
{
$r(t)=\textbf{PolynomialRemainder}(t^{\lambda\cdot order},p(t))$;\\
\textbf{if} $(r(t) \in \mathbb{Q})$ \textbf{\{Return} $\{\lambda\cdot order, r(t)\}$;\textbf{\} endif}\\
}
\caption{\textbf{RootOfRationalTest}}
\end{algorithm}
\subsection{Computing Reduced Degree}
Set $\mathfrak{a}$ to be an algebraic number with minimal polynomial $p(t)$ whose complex roots are $\mathfrak{a}=z_1,z_2,\ldots,z_d$.
\begin{proposition}
For integer $m\geq 1$, if the minimal polynomial of $\mathfrak{a}^m$ is $f(t)$, then $\{z\in\mathbb{C}|f(z)=0\}=\{z_i^m\;|\;i\in[d]\}$.
\end{proposition}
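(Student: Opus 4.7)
The plan is to prove the two set inclusions separately, using a combination of Galois-theoretic reasoning and the defining property of the minimal polynomial. Throughout I work inside the Galois closure $L/\mathbb{Q}$ of the splitting field of $p(t)$, so that every $\mathbb{Q}$-embedding of $\mathbb{Q}(\mathfrak{a})$ into $\overline{\mathbb{Q}}$ extends to an automorphism of $L$.

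For the inclusion $\{z_i^m \mid i\in[d]\}\subseteq\{z\in\mathbb{C}\mid f(z)=0\}$, first I would observe that $\mathfrak{a}$ is a root of $f(t^m)$, since $f(\mathfrak{a}^m)=0$. Because $p(t)$ is the minimal polynomial of $\mathfrak{a}$ over $\mathbb{Q}$, it divides $f(t^m)$ in $\mathbb{Q}[t]$. Hence every root $z_i$ of $p(t)$ satisfies $f(z_i^m)=0$, i.e.\ $z_i^m$ is a root of $f$. Equivalently, one could note that for each $i\in[d]$ there is an automorphism $\sigma_i\in\mathrm{Gal}(L/\mathbb{Q})$ with $\sigma_i(\mathfrak{a})=z_i$, and apply $\sigma_i$ to $f(\mathfrak{a}^m)=0$ to get $f(z_i^m)=0$.

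For the reverse inclusion, let $\beta\in\mathbb{C}$ be any root of $f(t)$. Since $f$ is the minimal polynomial of $\mathfrak{a}^m$ over $\mathbb{Q}$, $\beta$ and $\mathfrak{a}^m$ are Galois conjugates, so there exists $\tau\in\mathrm{Gal}(L/\mathbb{Q})$ with $\tau(\mathfrak{a}^m)=\beta$. But $\tau$ must permute the roots of $p(t)$, so $\tau(\mathfrak{a})=z_j$ for some $j\in[d]$, and therefore
\[
\beta=\tau(\mathfrak{a}^m)=\tau(\mathfrak{a})^m=z_j^m,
\]
which shows $\beta\in\{z_i^m\mid i\in[d]\}$.

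I do not anticipate a genuinely hard step here; the statement is essentially a consequence of the fact that Galois conjugation commutes with taking $m$-th powers. The only point to be slightly careful about is ensuring that the ``conjugation'' argument is performed in a Galois extension large enough to contain all $z_i$ and all roots of $f$, which is why I would pass to $L$ at the outset. Note that the proposition makes no claim of multiplicities: the set $\{z_i^m\}$ may have fewer than $d$ distinct elements (e.g.\ when $\mathfrak{a}^m$ has strictly smaller degree than $\mathfrak{a}$), and this is exactly consistent with $\deg f\le d$.
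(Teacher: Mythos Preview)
Your proof is correct and follows essentially the same Galois-theoretic route as the paper: the paper works in the splitting field $\mathbb{E}=\mathbb{Q}(z_1,\ldots,z_d)$, identifies the root set of $f$ with the Galois orbit $\{\sigma(\mathfrak{a}^m)\mid\sigma\in\mathrm{Gal}(\mathbb{E}/\mathbb{Q})\}$, and then rewrites this as $\{\sigma(\mathfrak{a})^m\}=\{z_i^m\}$ via transitivity on the roots of $p$. Your two-inclusion presentation unpacks exactly these steps; the only genuinely additional idea is the elementary alternative $p(t)\mid f(t^m)$ for the inclusion $\{z_i^m\}\subseteq f^{-1}(0)$, which avoids Galois theory for that direction. (Minor remark: the splitting field of $p$ is already Galois over $\mathbb{Q}$, so ``Galois closure of the splitting field'' is just the splitting field itself.)
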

\begin{proof}
Define $\mathbb{E}=\mathbb{Q}(\mathfrak{a},z_2,\ldots,z_d)$, $G= \text{Gal}(\mathbb{E}/\mathbb{Q})$ and $G(\mathfrak{a}^m)=\{\sigma(\mathfrak{a}^m)|\sigma\in G\}=\{\beta_1=\mathfrak{a}^m,\beta_2,\ldots,\beta_r\}$. Then it is not hard to see that  $f(t)=c\cdot\prod_{i=1}^{r}(t-\beta_i)$ for some $c\in\mathbb{Q}^*$. Hence $\{z\in\mathbb{C}|f(z)=0\}=\{\sigma(\mathfrak{a}^m)|\sigma\in G\}=\{\big(\sigma(\mathfrak{a})\big)^m|\sigma\in G\}=\{z_i^m\;|\;i\in[d]\}$. The last equality holds since the Galois group of irreducible polynomial $p(t)$ operates transitively on its roots.
\end{proof}
Set $ R_m=\{z_i^m\;|\;i\in[d]\}$ to be complex roots of the minimal polynomial of $\mathfrak{a}^m$, then $|R_m|=\textbf{deg}(\mathfrak{a}^m)$. By Definition \ref{reddegree}, \text{(i) }$\textbf{rdeg}(\mathfrak{a})$\\$=\min_{m\in\mathbb{Z},m\geq1}|R_m|$.
\begin{proposition}
If $z_i/z_j$ is a root of unity for some $1\leq i\neq j\leq d$, then $\mathfrak{a}$ is degree reducible.
\end{proposition}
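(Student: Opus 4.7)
The plan is to exhibit an explicit exponent $m$ that forces a collision among the $d$ values $z_1^m,\ldots,z_d^m$, so that $|R_m|<d$, and then invoke formula (i) preceding the statement.

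Concretely, assume $z_i/z_j$ is a root of unity for some $i\neq j$, and let $m=\textbf{Order}(z_i/z_j)$, so $(z_i/z_j)^m=1$ and $m\geq 1$. Multiplying by $z_j^m$ yields $z_i^m=z_j^m$. Therefore in the multiset $\{z_1^m,z_2^m,\ldots,z_d^m\}$ two entries coincide, and the set $R_m=\{z_k^m\mid k\in[d]\}$ has cardinality $|R_m|\leq d-1$.

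By the identity $|R_m|=\textbf{deg}(\mathfrak{a}^m)$ established just before the statement, and by observation (i) that $\textbf{rdeg}(\mathfrak{a})=\min_{m\in\mathbb{Z},m\geq 1}|R_m|$, we obtain
\[
\textbf{rdeg}(\mathfrak{a})\leq |R_m|\leq d-1<d=\textbf{deg}(\mathfrak{a}),
\]
which is precisely the definition of $\mathfrak{a}$ being degree reducible (Definition \ref{reddegree}).

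There is no real obstacle here: the only thing one has to be careful about is invoking the correct value of $m$ (the order of the root of unity $z_i/z_j$, which exists and is a positive integer) and remembering that reducibility is declared with respect to the strict inequality $\textbf{rdeg}(\mathfrak{a})<\textbf{deg}(\mathfrak{a})$, which is exactly what the collision $z_i^m=z_j^m$ gives.
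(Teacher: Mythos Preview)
Your proof is correct and follows essentially the same approach as the paper: take $m=\textbf{Order}(z_i/z_j)$, observe the collision $z_i^m=z_j^m$ forces $|R_m|<d$, and conclude $\textbf{rdeg}(\mathfrak{a})<\textbf{deg}(\mathfrak{a})$. The paper's proof is a one-line version of what you wrote, so your added detail is fine but not a different method.
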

\begin{proof}
Set $m=\mathbf{Order}(z_i/z_j)$, then $\mathbf{deg}(\alpha^m)=|R_m|<|R_1|=d=\mathbf{deg}(\alpha)$.
\end{proof}
Define an  equivalent relation $\sim$ on $\mathbb{C}$ by: $z\sim z'$ iff $z/z'$ is a root of unity. Because $z_i^m= z_j^m$ implies $z_i\sim z_j$, \text{(ii) }$|R_1/$$\sim$$|\leq|R_m|$ for any integer $m\geq1$. Denote by $M$ the unitary order of $p(t)$ (\cite{yokoyama1995finding} Definition 2.3), then $\forall i,j\in[d], (z_i\sim z_j \Leftrightarrow z_i^M=z_j^M)$. Thus $|R_M|=|R_1/$$\sim$$|$. Combine this with (i) and (ii), it follows that \text{(iii) }$|R_1/$$\sim$$|=\min_{m\in\mathbb{Z},m\geq1}|R_m|=\textbf{rdeg}(\mathfrak{a}).$ Since $z_i\sim z_j$ iff $z_i^m\sim z_j^m$, we conclude that  $|R_1/$$\sim$$|=|R_m/$$\sim$$|$.
\begin{proposition}
\label{correct}
$|R_m/$$\sim$$|=|R_m|\Rightarrow \textbf{\emph{rdeg}}(\mathfrak{a})=|R_m|=\textbf{\emph{deg}}(\mathfrak{a}^m)$.
\end{proposition}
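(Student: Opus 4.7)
The proof plan is to chain together the identities established immediately before the proposition; no new ideas are needed, only bookkeeping of the equivalence classes.

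First I would recall the three facts already in hand. From (iii), $\textbf{rdeg}(\mathfrak{a}) = |R_1/{\sim}|$, i.e.\ the reduced degree equals the number of $\sim$-equivalence classes among the roots of the minimal polynomial of $\mathfrak{a}$. Second, from the observation made right before the proposition (that $z_i \sim z_j$ iff $z_i^m \sim z_j^m$, because $z/z'$ being a root of unity is preserved and reflected by raising to the $m$-th power), we have the class-count invariance $|R_1/{\sim}| = |R_m/{\sim}|$. Third, by the proposition proved just above, $|R_m| = \textbf{deg}(\mathfrak{a}^m)$.

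Given these, the proof is a three-line chain. Assume $|R_m/{\sim}| = |R_m|$. Then
\[
\textbf{rdeg}(\mathfrak{a}) \;=\; |R_1/{\sim}| \;=\; |R_m/{\sim}| \;=\; |R_m| \;=\; \textbf{deg}(\mathfrak{a}^m),
\]
where the first equality is (iii), the second is the class-count invariance, the third is the hypothesis, and the fourth is the root-set identification. This gives both asserted equalities simultaneously.

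There is really no obstacle; the only thing one might want to verify carefully is that the hypothesis $|R_m/{\sim}| = |R_m|$ is consistent, i.e.\ that the $\sim$ classes within $R_m$ are all singletons, which is exactly what the hypothesis says and is used directly. The conceptual content of the proposition is therefore a \emph{stopping criterion}: once one finds any exponent $m$ for which the roots of the minimal polynomial of $\mathfrak{a}^m$ are pairwise $\sim$-inequivalent, that $m$ already realises the reduced degree, and no further search over larger exponents is needed.
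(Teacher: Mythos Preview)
Your proof is correct and is essentially identical to the paper's own one-line justification, which reads: ``This is obvious since $|R_m|=|R_m/{\sim}|=|R_1/{\sim}|$ and (iii) holds.'' You have simply spelled out the chain of equalities and identified the source of each link, which is exactly what the paper intends.
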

This is obvious since $|R_m|=|R_m/$$\sim$$|=|R_1/$$\sim$$|$ and (iii) holds.  \cite{yokoyama1995finding} provides an efficient algorithm \textbf{Unitary-Test} so that for $p(x)\in \mathbb{Q}[x]$, if the quotient of  a pair of  roots of $p$ is a root of unity, it returns $\{\textbf{True},k\}$, where $k$ is the order of the quotient.  Otherwise it returns $\{\textbf{False}, 0\}$. Based on this, we develop Algorithm \ref{algdegred} to compute a reducing exponent and the reduced degree of a given algebraic number. The algorithm terminates since the number of roots of $f(t)$ decreases in Step 5. Its correctness is justified by Proposition \ref{correct}.
\begin{algorithm}
\SetAlgoLined
\KwIn{Irreducible polynomial $p(t)\in\mathbb{Q}[t]$ with a root $\mathfrak{a}$.}
\KwOut{$\{prod, f(t)\}$; An integer $prod\geq1$, $prod\in \textbf{Rexp}(\mathfrak{a})$\\(Definition \ref{reddegree}); the minimal polynomial $f(t)$ of  $\mathfrak{a}^{prod}.$
}
$prod=1;$
$f(t)=p(t)$; \\
Suppose $f(t)$ has a root $r$;\\
$\{Reducible,k\}=\textbf{Unitary-Test}(f(t))$;\\
\textbf{if }($Reducible==\textbf{False}$) \textbf{\{Return} $\{prod,f(t)\}$;\textbf{\} endif}\\
$prod=prod\cdot k;$
$f(t)=\textbf{MinimalPolynomial }(r^k)$;\\
\textbf{Goto} Step 2;\label{algdegred}
\caption{\textbf{DegreeReduction}}
\end{algorithm}
\section{Experiments and Application}
\label{sec:example}
We show experimental results\footnotemark[1]\footnotetext[1]{For details turn to: \textcolor{blue}{https://github.com/zt007/exm/blob/master/details}.} verifying the effectiveness of the framework to tackle problems larger than those that \textbf{FindRelations} can handle. The Mathematica package \textbf{FindRelations} is available at\[\text{\textcolor{blue}{https://www3.risc.jku.at/research/combinat/software/}.}\]We implemented our algorithms with Mathematica. All results are obtained on a laptop of WINDOWS 7 SYSTEM with 4GB RAM and a 2.53GHz Intel Core i3 processor with 4 cores.

\begin{center}
\renewcommand\arraystretch{0.93}
\begin{tabular}	{c|c|c|c|c}
\hline
NO.&TD&RTD&R/B&Time (s) FD/FF/GE\\
\hline
1&9.3$\times10^6$&1.5$\times10^6$&4/3&26/2196/>3600\\
\hline
2&1.2$\times10^5$&336&5/2&114.8/114.3/>3600\\
\hline
3a&30&30&3/0&0.035/0.032/8.9\\
3b&120&120&4/0&2.69/2.68/>3600\\
3c&600&600&5/0&360/361/>3600\\
\hline
4a&108&108&4/1&11.18/0.79/0.11\\
4b&256&256&4/0&>3600/188.4/78.6\\
4c&729&729&4/2&>3600/2049/>3600\\
\hline
5a&$4.7\times10^4$&1&1/5&0.433/0.433/2.227\\
5b&$6.6\times10^7$&1&4/2&0.712/0.720/>3600\\
5c&$3.8\times10^{27}$&1&11/9&203.98/203.77/>3600\\
\hline
\end{tabular}
\end{center}
In the table TD stands for the product of degrees of given algebraic numbers, RTD the product of reduced degrees, R the rank (Definition \ref{algerank}) and B cardinality of the basis. FD and  FF are respectively the main algorithm and the main algorithm with \textbf{DecideDependence} replaced by \textbf{FindRelations}, while GE means applying \textbf{FindRelations} directly to the problem. TD (RTD) is the upper bound of the degree of the extended field generated by (reduced) given algebraic numbers. It indicates the difficulty for algorithm GE (FD, FF) to compute in the extended field. For algorithm FD (FF), the number of input algebraic numbers for function \textbf{DecideDependence} (\textbf{FindRelations}) is at most R$+1$ throughout the computation, thus R also characterizes the computational difficulty.

These three algorithms (FD, FF, GE) are all very efficient for small inputs (\emph{e.g.}, roots of an irreducible polynomial of degree 3) which we leave out in the table. The runtimes of these algorithms depend intensively on the input. We see that these algorithms perform differently from each other in the examples. In Example 1 both TD and RTD are large. GE fails to give an answer within an hour, while FD (FF) deals with partial input by \textbf{DecideDependence} (\textbf{FindRelations}) and is faster. In Example 2, RTD is much smaller than TD, FD (FF) is more efficient than GE which does not contain degree reduction. The non-degenerate condition holds for Examples 3a-c. The certificate for multiplicative independence is effective in this case. In Examples 4a-b GE does better than FD (FF) since the degree reduction does not help, R$/$(R$+$B) is close to $1$ and the non-degenerate condition fails to hold. However FF still gives correct answer. FF outperforms GE in Example 4c since R/(R$+$B)  is relatively small. Examples 5a-c deal with the case where all input numbers are roots of rationals. FD (FF) can handle very large problem of this type. To conclude, (i) RTD $<<$ TD, (ii) R/(R$+$B) $<<$ 1 and (iii) the non-degenerate condition holds for more input numbers are good for FD (FF) to tackle problems larger than those that GE can handle.

An interesting application is to compute the invariant polynomial ideal of  linear loops considered in \cite{lvov2010polynomial, structure} of the form:
\[X= b; {\rm While} \textbf{ True  }{\rm do} \;X= AX;\] Here $b\in\mathbb{Q}^m$, $A\in\mathbb{Q}^{m\times m}$ is diagonalizable with nonzero eigenvalues. Let $X= (X_1,\ldots,X_m)^T$ be the vector of indeterminates and $I(A,b)= \{f(X)\in\mathbb{C}[X]\;|\;f(A^kb)=0,\forall k\geq0\}$ the \emph{invariant polynomial ideal}. Suppose $A^T$$=$$ PDP^{-1}$, $D=\textbf{diag}\{x_1,x_2,\ldots,x_m\}$,  $x= (x_1,x_2,\ldots,x_m)^T$ and $\tilde{b}= P^Tb\in(\mathbb{C}^*)^m$. For $v\in\mathcal{R}_x$, define $v_+=(\max\{v(1),0\},\ldots,\max\{v(m),0\})^T$ and $v_-= (-v)_+$. Then by Theorems 3 and 2 in \cite{structure} one concludes
\begin{equation}
I(A,b)=\langle\{({\tilde{b}}^{v_-})(P^TX)^{v_+}-({\tilde{b}}^{v_+})(P^TX)^{v_-}|v\in\mathcal{R}_x\}\rangle.
\label{thm3}
\end{equation}
Define
$\mathfrak{g}_v(X)=({\tilde{b}}^{v_-})(P^TX)^{v_+}-({\tilde{b}}^{v_+})(P^TX)^{v_-}$,
we observe that $\forall k\in\mathbb{Z}$, $\mathfrak{g}_v(X)|\mathfrak{g}_{kv}(X)$. Hence if $\textbf{rank}(\mathcal{R}_x)=1$ with only one basis vector $u$, (\ref{thm3}) becomes
$
I(A,b)=\langle\mathfrak{g}_u(X)\rangle.
$
 Set $p_i(X) \in \mathbb{C}[X],i\in[s]$, consider the loop:
\begin{equation}\begin{array}{l}
 X= b;\\
While\;(p_1(X)==0\land\cdots\land p_s(X)==0)\\
do\;(X= AX;)
\end{array}\label{loop}\end{equation}
which does not terminate iff $\mathfrak{g}_u(X)|p_i(X),i\in[s]$. Set \[A = \left( {\begin{array}{*{20}{c}}
4&{226}&2&1&{ - 117}\\
1&{126}&1&0&{ - 64}\\
0&{ - 91}&0&{ - 1}&{46}\\
0&{80}&1&0&{ - 40}\\
4&{232}&2&1&{ - 120}
\end{array}} \right),b = \left( {\begin{array}{*{20}{c}}
2\\
{ - 1}\\
0\\
3\\
{ - 4}
\end{array}} \right).\] The only basis vector of the exponent  lattice defined by eigenvalues of matrix $A$ is $u=(0, 1, 1, 0, 1)^T$. Computing by definition shows
\\
\begin{small}$
\\\mathfrak{g}_u(X)=
7674169 X_1^3 - 31858655 X_1^2 X_2 + 22396826 X_1 X_2^2 - 165997 X_4^3- $\\$
255380 X_2 X_3^2  + 12769 X_3^3 +153228 X_4 X_5^2 + 1442897 X_5^3+459684 X_1 X_3^2 - $\\$
3051791 X_1 X_2 X_4 + 3639165 X_1^2 X_3 +8504154 X_2 X_3 X_5- 472453 X_2^2 X_3 -$\\$
5694974 X_1 X_4 X_5 +1391821 X_2 X_3 X_4- 127690 X_1 X_4^2 + 2106885 X_2 X_4^2-$\\$
2311189 X_1 X_3 X_5-1442897 X_3 X_4 X_5 - 906599 X_4^2 X_5 -3639165 X_1 X_5^2 - $\\$
204304 X_3 X_4^2  - 5465132 X_1^2 X_5 + 35817045 X_1 X_2 X_5 + 3971159 X_1^2 X_4 - $\\$
2387803 X_2 X_5^2- 995982 X_3 X_5^2 -28347180 X_2^2 X_5 +9947051 X_2 X_4 X_5 -$\\$
12769 X_3^2 X_4 + 5528977 X_2^3 - 7380482 X_2^2 X_4  - 293687 X_3^2 X_5 -$\\$
8899993 X_1 X_2 X_3  + 855523 X_1 X_3 X_4- 858983399.
$
\end{small}
\\

The ideal in (\ref{thm3}) is a lattice ideal up to an invertible linear transformation in the coordinates. When there are at least two basis vectors, we may compute according to \cite{markov}, from the basis vectors, a Markov basis $\{\eta_j\}_{j=1}^r$, \emph{s.t.} $
I(A,b)=\langle\{\mathfrak{g}_{\eta_j}(X)\}_{j=1}^r\rangle
$
by Lemma A.1 of \cite{markov}. Then (\ref{loop}) does not terminate iff $p_i(X)\in\langle\{\mathfrak{g}_{\eta_j}(X)\}_{j=1}^r\rangle,i\in[s]$.

\section{Conclusions}
\label{sec:conclusions}
This paper provided an effective framework to construct exponent lattice basis in an inductive way. In many situations the framework can handle problems  larger than those that \textbf{FindRelations} can do. It is efficient especially when degrees of algebraic numbers to deal with can be intensively reduced, when the rank of algebraic numbers is small and when non-degenerate condition holds for many of the input algebraic numbers. The non-degenerate condition provides a relatively cheaper certificate for multiplicative independence. It usually holds for randomly picked algebraic numbers, thus it can be useful for proving their multiplicative independence. However, it often fails when numbers are algebraically dependent. Though the problem of computing exponent lattice basis for nonzero algebraic numbers is still difficult (the exponent lattice  basis for roots of a general rational polynomial of degree $5$ cannot be computed fast), our framework still casts light on how we may handle larger problems.

\section*{ACKNOWLEDGMENTS}
This work was supported partly by NSFC under grants 61732001 and 61532019. The authors thank Professor Shaoshi Chen for providing helpful references and Haokun Li for useful advice on programming.

\bibliographystyle{plain}

\begin{thebibliography}{99}
\begin{scriptsize}
\bibitem{recurrence}
S. Almagor, B. Chapman, M. Hosseini, et al. \emph{Effective Divergence Analysis for Linear Recurrence Sequences.} arXiv preprint arXiv:1806.07740, 2018.
\bibitem{lemma}
F. Barrera Mora. \emph{On subf$\grave{\text{I}}$ields of radical extensions}. Communications in Algebra, 1999, 27(10): 4641-4649.
\bibitem{chen2011structure}
S. Chen, R. Feng, G. Fu, et al. \emph{On the structure of compatible rational functions}. Proceedings of the 36th international symposium on Symbolic and algebraic computation. ACM, 2011: 91-98.
\bibitem{derksen2005quantum}
H. Derksen, E. Jeandel, P. Koiran. \emph{Quantum automata and algebraic groups}. Journal of Symbolic Computation, 2005, 39(3-4): 357-371.
\bibitem{ge1993algorithms}
G. Ge. \emph{Algorithms related to multiplicative representations}. PhD thesis, University of California, Berkeley, 1993.
\bibitem{markov}
R. Hemmecke, P. N. Malkin. \emph{Computing generating sets of lattice ideals and Markov bases of lattices}. Journal of Symbolic Computation, 2009, 44(10): 1463-1476.
\bibitem{isolation}
J. L. G. Hewage, J. Lanel. \emph{Complex root isolation}. PhD thesis, Oakland University, 2009.
\bibitem{kauers2005algorithms}
M. Kauers. \emph{Algorithms for nonlinear higher order difference equations}. PhD thesis, RISC-Linz, Linz, Austria, 2005.
\bibitem{kauers2008computing}
M. Kauers, B. Zimmermann. \emph{Computing the algebraic relations of C-finite sequences and multisequences}. Journal of Symbolic Computation, 2008, 43(11): 787-803.
\bibitem{loher2004uniformly}
T. Loher, D. Masser. \emph{Uniformly counting points of bounded height}. Acta Arithmetica, 2004, 3(111): 277-297.
\bibitem{loxton1983multiplicative}
J. Loxton, A. van der Poorten. \emph{Multiplicative dependence in number fields}. Acta Arithmetica, 1983, 42(3): 291-302.
\bibitem{lvov2010polynomial}
M. S. Lvov. \emph{Polynomial invariants for linear loops}. Cybernetics and Systems Analysis, 2010, 46(4): 660-668.
\bibitem{structure}
M. S. Lvov. \emph{The Structure of Polynomial Invariants of Linear Loops}. Cybernetics and Systems Analysis, 2015, 51(3): 448-460.
\bibitem{masser1988linear}
D. W. Masser. \emph{Linear relations on algebraic groups}. New Advances in Transcendence Theory, 1988: 248-262.
\bibitem{van1977multiplicative}
A. van der Poorten , J. Loxton. \emph{Multiplicative relations in number fields}. Bulletin of the Australian Mathematical Society, 1977, 16(1): 83-98.
\bibitem{yokoyama1995finding}
K. Yokoyama, Z. Li, I. Nemes. \emph{Finding roots of unity among quotients of the roots of an integral polynomial}. Proceedings of the 1995 international symposium on Symbolic and algebraic computation. ACM, 1995: 85-89.
\bibitem{tori}
D. Bertrand. \emph{Duality on tori and multiplicative dependence relations}. Journal of the Australian Mathematical Society, 1997,  62(2): 198-216.
\bibitem{mat}
E. M. Matveev. \emph{On linear and multiplicative relations}. Sbornik: Mathematics, 1994,  78(2): 411-425.


\end{scriptsize}
\end{thebibliography}

\def\cprime{$'$}

\end{document}